\begin{document}
\title{Data Curation from Privacy-Aware Agents\thanks{The work by Roy Shahmoon and Moshe Tennenholtz was supported by funding from
the European Research Council (ERC) under the European Union’s
Horizon 2020 research and innovation programme (grant agreement
740435).}}
%
%\titlerunning{Abbreviated paper title}
% If the paper title is too long for the running head, you can set
% an abbreviated paper title here
%
\author{Roy Shahmoon \and
Rann Smorodinsky \and
Moshe Tennenholtz}
\authorrunning{R. Shahmoon et al.}
% First names are abbreviated in the running head.
% If there are more than two authors, 'et al.' is used.
%
\institute{Technion - Israel Institute of Technology, Haifa
3200003, Israel
\email{shroy@campus.technion.ac.il\\
\{rann,moshet\}@technion.ac.il}}

% \url{http://www.springer.com/gp/computer-science/lncs}}
%
\maketitle              % typeset the header of the contribution
\begin{abstract}
A data curator would like to collect data from privacy-aware agents. The collected data will be used for the benefit of all agents. Can the curator incentivize the agents to share their data truthfully? Can he guarantee that truthful sharing will be the unique equilibrium? Can he provide some stability guarantees on such equilibrium? We study necessary and sufficient conditions for these questions to be answered positively and complement these results with corresponding data collection protocols for the curator.
Our results account for a broad interpretation of the notion of privacy awareness. 

\keywords{Mechanism design  \and Privacy \and Unique equilibrium}
\end{abstract}
%
%
%

% \title{Commitments, Privacy, and Computation}
% \author{Roy shahmoon, Rann Smorodinsky, Moshe Tennenholtz}
% \date{January 2021}

% \begin{document}

% \maketitle

% \begin{abstract}

% We introduce a model incorporating the economic principle of conditional commitments for data sharing between senders and a receiver/center with multi-party computation of a function based on private inputs, while senders care about their privacy.
% We show a mechanism that truthfully implements in equilibrium exact computation, and prove it is optimal.  
    
% \end{abstract}

\section{Introduction}

Consider a society of agents who would like to compute some aggregate statistic of their online activity. For example, their average daily consumption of online games, the median number of messages they send, or their daily spend on Amazon. To do so they share information with some central authority. The way agents share information is by downloading an add-on to their browser which collects and reports the data. They can choose to download an add-on which accurately reports the data (hereinafter, the `accurate' add-on), an add-on which completely mis-represents the data, or anything in between.

The agents have privacy concerns and so may not be willing to download the accurate add-on. A data curator (also referred to as the `center'), who possesses some data as well, collects the agents' data and uses it to make some computations which she will then share privately back with the agents. We ask whether the curator can propose a scheme for sharing her computations which will induce the agents to download the accurate add-on. Technically, we are interested in schemes where there is a {\bf unique} equilibrium in which agents download the accurate add-on.

%Consider a situation where $N$ co-workers want to know the prevalence of some disease amongst them. They all plan to take a quick test and would like to know the total number of those which test comes our positive. On the other hand the test is take privately and the agents have a clear preference for keeping their test result Private. The workers' manager proposes a need to propose a  

More abstractly, we study a setting where $N$ agents and a data curator plan ahead of time how to share information in order to compute a function of the information.
%Before agents have access to their private information they nominate one of the agents as the curator of this future data base.
If the agents have no privacy concerns then the curator can simply ask the agents for their information, compute the function and distribute the outcome back. We are interested in the setting where agents actually have privacy concerns. We capture the agents concerns by a privacy cost function. This cost is a function of how accurate is the information they reveal. In our simple setting, the private information of the agents is binary and privacy cost is modelled as an arbitrary increasing function of the probability they will disclose the actual value of this private bit.%
\footnote{In fact, for our results we do not require monotonicity but rather that privacy cost is highest when the bit is fully revealed and lowest when no information is disclosed.}
 
The game between the agents and the curator takes place ahead of time, before any of the information becomes known to the agents. Each agent submits an information disclosure protocol which, in our binary setting, is parameterized by two numbers - the probabilities of disclosing the actual bit for each of the two possible outcomes.
 
The curator also submits her protocol, which discloses which information she will share back with each of the agents depending on actual vector of protocols (the vector of pairs of probabilities). The timeline of the game is as follows. The curator proposes a protocol for sharing information and then the N agents submit their own protocols simultaneously. Only then the private information is revealed to the agents and from then on information passes back and forth between the agents and the curator according to the protocols that they have all designed in advance. Note that agents play  no active role in the last stage as they cannot affect the protocols during run-time.

Our objective is to identify necessary and sufficient conditions on the parameters of the problem that will allow the curator to propose a protocol which, in turn, will `guarantee' that the agents, in-spite of having privacy concerns, will reveal their information accurately (in the jargon of our first paragraph - will download the accurate add-on) and consequently the function will be computed accurately by all. In other words, where complete information revelation is a unique equilibrium.%
\footnote{Requiring a unique equilibrium makes our problem technically more challenging and conceptually more compelling than merely requiring existence of a truthful equilibrium, as we demonstrate in the paper.}

The curator's protocol induces a game among the agents.
We will deem the protocol as successful if there exists an equilibrium where agents reveal their information accurately and the correct value of the function becomes known to all. Our goal is to identify conditions for the existence of a successful protocol with a unique equilibrium.
%We will furthermore say that the protocol guarantees success if the aforementioned equilibrium is the unique one. 

Indeed we identify such conditions. First, we define the notion of a (privacy-) fanatic agent as an agent whose gain from knowing the value of the function is overshadowed by the dis-utility from disclosing her private information (Definition \ref{Fanatic Definition}). We show that a necessary and sufficient condition for the existence of a protocol with a truthful equilibrium (not necessarily unique) is that no agent is fanatic (Corollary \ref{Corollary 4}).

Second, to guarantee the uniqueness of the aforementioned equilibrium we argue that the curator must herself have private information which is part of the input and, furthermore, this information must not be negligible. We refer to curators whose information is negligible as `helpless' (Definition \ref{Helpless Definition}) and prove  that a necessary and sufficient condition for the existence of a successful protocol which equilibrium is unique, is for the agents not to be fanatic and for the curator not to be helpless (Corollary \ref{Corollary 4}, Lemma \ref{Lemma 2}).  

In addition,  we go ahead and construct such protocol, which we refer to as the {\it competitive protocol}. The {\bf unique equilibrium} of this protocol is one where agents forego their privacy concerns and information is passed on accurately. Our protocol is optimal among all successful protocols in the following sense. Assume that for some parameters of the problem there exists a protocol possessing an equilibrium that supports the accurate computation of the function. Then, indeed, on these parameters our protocol has this exact property (Theorem \ref{Optimality Theorem}).

This protocol has a disturbing caveat. Its single equilibrium is not in dominant strategies. In fact, off equilibrium it may be the case that a cooperative agent is guaranteed nothing. We then go on and introduce an alternative optimal protocol with a unique truthful equilibrium, which provides some natural guarantees to cooperative agents, independently of the behavior of others.

 \section*{Related Work}
 
Our work employs strategies that are conditional commitments. In this regard our work is related to work on multi-sender single-receiver conditional commitments. The setting of conditional ex-ante commitments which was initially introduced for a single sender \cite{kamenica_gentzkow_2009} has been extended to the context of multiple senders \cite{GKmultisender}.  In particular, it is shown that a greater competition tends to increase the amount of information revealed. Our main protocol also  exploits competition among data owners to facilitate data contribution.

    Among work in (mediated) multi-party computation the work most related to ours is by  Smorodinsky and Tennenholtz \cite{smorodinsky2006overcoming}, later extended by Aricha and Smorodinsky \cite{aricha2013information}. In this line of research the authors address the question of multi-party computation where agents would like to free-ride on others' contributions. 
    An important aspect of their article is that they address the agents sequentially (and not simultaneously as in our work). Another difference is that while in our model the agents wouldn't like to share their values due to privacy concerns, the agents in their model incur a cost for retrieving their values. Thus, in their model the agents' cost is discrete (it is either zero or the cost of retrieval) but in ours it is continuous (by sharing partial data). Similar to our work, they aim to find a mechanism which overcomes the free-riding phenomenon and convinces the agents to send true values for correct calculation of the social function. 
    
    Another genre of mediated multi-party computation focuses on  getting exclusive knowledge of the computation result rather than considering privacy considerations. The first work on that topic, referred to as non-cooperative computation (NCC) \cite{shoham2005non}, considers a setting where each agent wishes to compute some social function by declaring a value to a center, who reports back the results to the agents after applying the function. In the extensions of the basic model the authors also  give a greater flexibility to the center. The results of the paper show conditions which classify the social functions as deterministic NCC, probabilistic NCC and subsidized NCC in both correlated and independent values settings. In addition, Ashlagi, Klinger and Tennenholtz \cite{ashlagi2007k} extend the study of NCC functions to the context of group deviations. 
    Notice that in our work, the incentives of the agents are correctness and privacy, in contrast to the NCC setting which focuses on correctness and exclusivity (the desire that other agents do not compute the function correctly) as main incentives.     However,  McGrew, Porter and Shoham \cite{mcgrew2003towards} extended the framework of \cite{shoham2005non}, focusing on cryptographic situations. The utility of the agents in their paper is lexicographic and relies on four components: correctness, privacy (similarly to our model), exclusivity and Voyeurism (the wish to know other agents’ private inputs). They show properties of the social functions (such as reversibility or domination) that affect the existence of an incentive-compatible mechanism in each ordering of the utility. The utility function we consider is not lexicographic. In fact, their work does not consider different privacy levels, and refers to privacy only through whether agent $j$ knows agent $i$'s value with certainty. There is also no importance in their work for hiding value from the center.    
    Besides the above central differences, in our model the center possesses her own private value (in contrast to theirs) and we also allow the center much more flexibility in the protocol (not necessarily returning the result of the social function's calculation).     
    
    Much work has been devoted to variants of the  non-mediated multi-party computation problem. For example, Halpern and Teague \cite{halpern2004rational} approach the problems of secret sharing and multiparty computation. In their paper, the agents would like to know the social function's value but also prefer exclusivity. They focus on finding conditions on the running time of the multiparty computation. Specifically, they show that there is no mechanism with a commonly-known upper bound on running time for the multiparty computation of any non-constant function. In contrast to our work, the mediator is replaced by cryptographic techniques and there is no explicit interest in privacy. The emphasis is mainly on removing the reliable party (as developed in \cite{AbrahamDGH06,AbrahamDGH19}).
    
    Another setting with some similarity to ours is by Ronen and Wahrmann \cite{ronen2005prediction} who introduced the concept of prediction games. In prediction games there are agents with private information and a center whose goal is to compute the value of a social function ahead of its realization (and hence the name {\bf prediction} games). Whereas, in their model the center can condition his payoffs to the agents on the realization of the social function this is not the case here, where the center can only use the private information of all others (including herself) to incentivize each agent. Two additional differences are as follows:
    First, the agents' costs are discrete, not continuous (the agents incur a cost by accessing the information, without privacy concerns). Second, the agents are approached serially (not simultaneously).

    % Third and most important, the center rewards the agents in retrospect, according to the correctness of the social function's calculation. While in our model the center sacrifices her information and can ensure it for truthful agents, in theirs she pays the agents according to the social function's calculation, meaning that an agent's utility is directly affected by other agents. In addition, in their model the center can't verify the agents' honesty in contrast to our model (where the agents are obligated to the strategy they pick). This model does not focus on privacy. 
    
    Similarly, models of information elicitation, and in particular ones that consider the decisions to be made following the aggregation of information \cite{ChenK11} are relevant to our work. These however focus on the use of monetary transfers 
    for eliciting information about uncertain events, rather than on non-monetary mechanisms for function computation based on distributed inputs and conditional commitments, nor they refer to privacy.   
    
    The papers \cite{ghosh2015selling,cummings2015accuracy,chen2018optimal} address a remotely related domain of data markets for privacy, where a data analyst wishes to pay as little as possible for data providers in order to obtain an accurate estimation of some statistic.
    
    Our setting relates to mechanism design without money, where the center may use her private information in order to incentivize the agents to reveal their secrets \cite{procaccia2013approximate}.
    
    Notice that our notion of privacy differs from the one used in the context of differential privacy \cite{DworkMNS06,DworkRoth}. The rich literature on differential privacy is concerned with protocols where agents' private secrets should not be revealed to other agents. However, the privacy concerns that agents have in our model is vis-a-vis the information revealed to the center.
    
    Another related setting is (silo-)federated learning \cite{kairouz2021advances}. In that setting, a small number of firms share data to obtain some joint computation. When these firms have privacy costs then we get to a setup that resembles the one we study.

    % The latter is in the sense of federated learning \cite{yang2019federated}, where we assume computations will be done from time to time and the mediator is the powerful organizer that we care about having our information revealed to it.  

    % to add:
    
    % https://dl.acm.org/doi/pdf/10.1145/1120694.1120700
    
    % http://web.stanford.edu/~gentzkow/research/BayesianCompRich.pdf
    
    % http://proceedings.mlr.press/v54/mcmahan17a/mcmahan17a.pdf
    
    % https://link.springer.com/content/pdf/10.1007/11681878_14.pdf

\section{The Model}

    In our multi party computation setting we assume a society of $N$ agents and a center (denoted as agent 0), each has a private information (taking the form of a binary bit). By revealing these secrets the agents derive dis-utility (due to privacy concerns).
    The center has no privacy concerns and she is the only agent who can communicate with the others and facilitate the desired computation of a binary social function $g : \{0,1\}^{N+1} \to \{0,1\}$.
    The flow of the game is as follows:
    
    \begin{enumerate} 
    \item First, the center declares a communication protocol between herself and a set of agents. The protocol is a function that receives the agents' strategies (conditional commitments) and decides how to operate accordingly.
    \item
    Then, each agent commits to a strategy which takes the form of a distribution over a set of messages as a function of her private bit, and sends it to the center. Commitments are made  simultaneously.      Notice that the center is obligated to the protocols and the agents are obligated to their strategies.
    \item
    Afterwards, agents receive their bits simultaneously, random messages are drawn according to the aforementioned strategies and then communicated to the center.
    \item
    Finally, the center's responses are generated according to the vector of agents' inputs and each agent receives her own response. Based on this response each agent guesses the true value of the social function and derives a (non-negative) profit if and only if the guess is correct.
    \end{enumerate}
    We assume that the agents are rational individuals trying to maximize their own expected utility. The center's goal is to uncover the true value of the social function.
    We now introduce some formal notations.

    Let $\Omega$ be the sample space of the game where a typical instance is $b \times m \times f$ such that: 
    \begin{itemize}
        \item $b=(b_0,\dots,b_N)\in\{0,1\}^{N+1}$ is a vector containing the bits of the center and the agents.
        \item $m=(m_0,\dots,m_N) \in \{0,1\}^{N+1}$ is a vector of the agents' messages to the center ($m_0=b_0$ is the center's message to himself).
        \item $f=(f_1,\dots,f_N)$ is a vector of the center's replies to each agent, such that $\forall i\in \{1,\dots,N\}: f_i\in \{0,1\}^{N+1}$.
    \end{itemize}
    
    Note that each message from an agent (including the center) to the center is a bit, but each reply from the center to an agent (excluding the center) is a vector containing $N+1$ elements. These elements should be interpreted as information on the $N+1$ original bits of all the agents. 
    
    We assume that the bits are $i.i.d$ and distributed uniformly over $\{0,1\}$.\footnote{The
    model can be generalized to any distribution over $\{0,1\}$ while maintaining the same results. The main change would be in the set of zero-information strategies (which is defined later in this page).}
    
    % Let $g$ be a social function 
    % \[g:\{0,1\}^{N+1}\to\{0,1\} \, .\]
    For each agent $i\in\{1,\dots,N\}$ we define her  strategy space (possible conditional contracts) as follows:
    \[S_i=\{(p_i,q_i)|\;p_i,q_i\in[0,1]\}\]
    where 
    \[p_i=p(m_i=0|b_i=0), q_i=p(m_i=0|b_i=1).\]
    
    A \emph{strategy profile} $\vec{s}=(s_1,\dots,s_N)$ is a set of the agents' strategies, such that $s_1 \in S_1,\dots,s_N \in S_N$.\\
    The strategy $s_i=(1,0)$ fully reveals the agent's secret and hence is called \emph{truthful} and denoted by $s^t$.
    Any strategy $s_i = (p_i,q_i)$ such that $p_i = q_i$ reveals nothing about the agent's secret and hence is called a 
    \emph{zero-information} strategy. We denote by $S^r = \{(c,c)\;|\; c\in [0,1]\}$ the set of zero-information strategies.
    
    For each agent $i \in \{1,\dots,N\}$ and for each agent $j \in \{0,\dots,N\}$, let $F_{i,j}$ be the center's protocol for agent $i$ about agent $j$.
    \[F_{i,j}:S_1\times S_2\times \dots \times S_N \times m_j \to \Delta(\{0,1\}).\]
    
    Let $F_i=(F_{i,0},\dots,F_{i,N})$ be the center's protocol for agent $i$ such that
    \[F_i:S_1\times S_2\times \dots \times S_N \times m_0 \times \dots \times m_N \to \Delta(\{0,1\})^{N+1}.\]

    % Let $F_i$ be the center's protocol for each agent $i$. 
    % \[F_i:S_1\times S_2\times...\times S_N \to \Delta(\{0,1\})^N = [0,1]^N.\]
    
    The protocol $F_i$ for agent $i$ determines the probability distribution over the messages to $i$ as a function of all agents' conditional commitments. 
    Notice that each message from the center to an agent is a tuple of $N+1$ bits (e.g. standing as place holders that can serve to provide information about all participant bits).
    \footnote{We can reduce the communication complexity of the protocol by having the center return two bits (instead of $N+1$) to each agent. To do so the center simulates agents' computations for the estimate of the social function in each of the agent's possible bits. Nevertheless, we maintain the current presentation as it is more transparent.}
    
    % \textcolor{blue}{Note that the protocol may be more efficient in terms of implementation by returning an estimate to the social function for each agent (instead of $N+1$ bits). However, this may be problematic. The center doesn't know the agents' private information and hence her reply for agent $i$ may not take into account the true value of agent $i$. In addition, our setting allows much more flexibility to the center and strengthen our results.}}
    
    Let $F=(F_1,\dots,F_N)$.
    Notice that the prior distribution over the bits together with the protocol $F$ and the strategy profile $\vec{s}$  induce a probability distribution over the sample space $\Omega$. Knowing the value of her own bit $b_i$, her message $m_i$, and the center's reply $f_i$, the agent computes the conditional probability $p(g(b)=1|b_i,m_i,f_i)$ and uses it to submit her guess, $a_i$, over the value of $g$ as follows:
     \[a_i(b_i,m_i,f_i)=
        \begin{cases}
          1 \hspace{2.6cm} \text{ if } p(g=1|b_i,m_i,f_i) > p(g=0|b_i,m_i,f_i)\\
          0 \hspace{2.6cm} \text{ if } p(g=1|b_i,m_i,f_i) < p(g=0|b_i,m_i,f_i)\\
          random\{0,1\} \hspace{0.7cm} \text{ if } p(g=1|b_i,m_i,f_i) = p(g=0|b_i,m_i,f_i)
        \end{cases}
    \]
    
    % Let $i$'s expected profit given the vector of strategies $p$ and the protocol $F_i$ be $V_i(p;F_i) = p(a_i=g(b))$.
    
    Let as denote the profit of agent $i$:
    \[v_i(b,m_i,f_i)= 
        \begin{cases}
          c_i \hspace{0.5cm} \text{ if } a_i(b_i,m_i,f_i) = g(b)\\
          0 \hspace{0.6cm} \text{ if } a_i(b_i,m_i,f_i)\neq g(b)
        \end{cases}
    \]
    where $c_i>0$ is an arbitrary constant.
    
    For a given protocol $F$, the vector of strategies $\vec{s}=(s_1,\dots,s_N)$ induces the expectation of $v_i$, denoted as $V_i(\vec{s};F)$.
    Note that $V_i(\vec{s};F)$ doesn't depend on $F$ but only on $F_i$ and from now on we will write $V_i(\vec{s};F_i)$ rather than $V_i(\vec{s};F)$. 
    
    We shall refer to the function
    \[V_i(S_i,S_{-i};F_i):\;S_i \times S_{-i} \times F_i \to \mathbb{R_+}\]
    as the \emph{profit function} of agent $i$ in the protocol $F_i$.

    % Each agent $i$ being aware of the protocol $F_i$ , the vector of strategies of all the players $p$ computes the following conditional probability $p(g(b)=1|b_i,m_i,f_i;F_i,p)$.
    
    Let $price_i$ be the privacy cost function of agent $i$.
    \[price_i: S_i \to \mathbb{R_+}\]
    We assume that an agent incurs a zero privacy cost if and only if she plays a zero-information strategy. Furthermore, truthfulness incurs the highest price.\footnote{It seems natural to further assume that $price_i$ is an increasing function of $p_i$ and a decreasing function of $q_i$ but it is not necessary of our results.}
    Formally:
    \[\forall i\in \{1,\dots,N\}, \forall s^r\in S^r, \forall s_i\in S_i-S^r: price_i(s^t) \geq price_i(s_i) > price_i(s^r) = 0.\]
    In addition, we assume that $price_i$ is continuous.
    
    % The \emph{relative price} of agent $i$ when choosing a strategy $s_i \in S_i$ is \[price_i^r(s_i) = \frac{price_i(s_i)}{price_i(s^t)}.\]
    % Therefore,
    % \[\forall i\in \{1,\dots,N\}, \forall s^r\in S^r, \forall s_i\in S_i-S^r: 1 = price_i^r(s^t) \geq price_i^r(s_i) > price_i^r(s^r) = 0\]
    % Notice that the relative price is continuous as well.
    
    % We denote the \emph{profit function} of an agent $i$ in the protocol $F_i$ as follows:
    % \[V_i(S_i,S_{-i};F_i):\;S_i \times S_{-i} \times F_i \to \mathbb{R_+}\]
    % We will elaborate on this function in the following sub-section.\\
    We define the \emph{utility function} of an agent $i$ in the protocol $F_i$ as:
    \[U_i(S_i,S_{-i};F_i):\;S_i \times S_{-i} \times F_i \to u_i\in\mathbb{R}\]
    \[s.t.\hspace{0.3cm} U_i(S_i,S_{-i};F_i)=V_i(S_i,S_{-i};F_i)-price_i(S_i).\]

    % Here we present a useful definition that will help us later on:
    
    % % Given $s_{-i}\in S_{-i}$, an agent $i$ is called \emph{indifferent} in $F_i$ if:
    % % \[\exists\: s_i\in S_i \; s.t. \; s_i\neq s^t \text{ and } E[u_i(s_i,s_{-i};F_i)]=E[u_i(s^t,s_{-i};F_i)].\]
    % \begin{definition} \label{Observer Definition}
    % An agent $i\in \{0,\dots,N\}$ (including the center) is called an \emph{observer} if:
    % \[\forall\: val_{-i}\in \{0,1\}^{N}:\; g(val_0,\dots,val_{i-1},0,val_{i+1},\dots,val_N)=g(val_0\dots,val_{i-1},1,val_{i+1},\dots,val_N).\]
    % \end{definition}

     Let $F=(F_1,\dots,F_n)$ be a protocol. Recall that the agents choose a strategy before they receive their private information. We say that a strategy profile $\vec{s}=(s_1,\dots,s_N)$ is an equilibrium in $F$ if \[\forall i\in \{1,\dots,N\}, \forall s'_i\in S_i: \; U_i(s_i,s_{-i};F_i) \geq U_i(s'_i,s_{-i};F_i)\]

    Let us recall the flow of the game using the new notations:
    \begin{itemize}
        \item
    The center declares a protocol $F=(F_1,\dots,F_N)$.
    \item
    Each agent $i \in \{1,\dots,N\}$ chooses a strategy $s_i\in S_i$ and immediately incurs a cost of $price_i(s_i)$.
    \item
    The private bit $b_i$ of each agent $i$ is realized and the message $m_i=0$ is sent to the center with the probability $p_i$ whenever $b_i=0$ and with the probability $q_i$ whenever $b_i=1$. Otherwise, the message $m_i=1$ is sent with the complementary probabilities.
    \item
    Each agent $i$ receives a reply $f_i \in \{0,1\}^{N+1}$ according to the distribution $F_i(\vec{s},m) = (F_{i,0}(\vec{s},m_0),$ $\dots,F_{i,N}(\vec{s},m_N))$ where $\vec{s}=(s_1,\dots,s_N)$.
    \item
    Finally, each agent $i$ leverages her information $b_i,m_i,f_i,\vec{s}$ by applying Bayes rule to compute her best guess of the value of $g(b)$. If she guesses correctly she incurs a profit, otherwise she doesn't. Her overall expected utility is $U_i(\vec{s}=(s_1,\dots,s_N);F_i) = V_i(\vec{s}=(s_1,\dots,s_N);F_i) - price_i(s_i)$.
    
    % according to the action she takes (whether the social function $g$ is 0 or 1) based on the information she was collecting from the center $F_i(p,m)$, her message to the center $m_i$, her actual bit $b_i$ and her privacy cost $price_i(s_i)$.  
    \end{itemize}
    
    \subsection{A Preliminary Result}
    
    % Recall that an agent $i$ gets an output $F_i(p,m)=f_i$ from the center (given a protocol $F_i$, a strategy profile $p$ and the agents' messages $m$) and needs to decide whether the value of $g$ is 0 or 1.
    % If her guess is correct, she gets a positive profit (the same profit regardless of $g$'s value - 0 or 1).
    % If her guess is wrong her profit is 0.
    % After receiving the information from the center, the agents calculate the probability whether $g$ is 0 or 1, and then pick the value with the higher probability.
    %     % Let $P$ be probability function over the sample space $P:\Omega \longrightarrow [0,1]$.
    % Given the above, an agent has an {\em action function} that determines her actual guess:
    
    %     $\forall i\in \{1,\dots,N\}$ let $a_i: \{0,1\} \times \{0,1\} \times \{0,1\}^N \longrightarrow \{0,1\}$ be the action function of agent $i$.
    % \[a_i(b_i,m_i,f_i)=
    %     \begin{cases}
    %       1 \hspace{2.2cm} \text{ if } p(g=1|b_i,m_i,f_i) > p(g=0|b_i,m_i,f_i)\\
    %       0 \hspace{2.2cm} \text{ if } p(g=1|b_i,m_i,f_i) < p(g=0|b_i,m_i,f_i)\\
    %       random\{0,1\} \hspace{0.47cm} \text{ if } p(g=1|b_i,m_i,f_i) = p(g=0|b_i,m_i,f_i)
    %     \end{cases}
    % \]
    % Notice that this function gets all the information that an agent has and outputs her decision.
    
    % The profit of agent $i$ is
    % \[v_i(b,m_i,f_i)= 
    %     \begin{cases}
    %       c_i \hspace{0.5cm} \text{ if } a_i(b_i,m_i,f_i) = g(b)\\
    %       0 \hspace{0.5cm} \text{ if } a_i(b_i,m_i,f_i)\neq g(b)
    %     \end{cases}
    % \]
    % Note that $\forall i\in \{1,\dots,N\}:\; c_i>0$ is a constant.
    
    After receiving the center's reply, each agent has to decide whether the true value of $g$ is 0 or 1 based on the information she has ($b_i,m_i,f_i$). Therefore, the expected profit will be the maximum between the alternatives from the agent's point of view at the time of the decision.  
    
    Now, we prove that the expected profit of an agent is the following intuitive expression that will serve us later in the proof of Lemma \ref{Comparison Lemma}:
    \begin{lemma} \label{Profit Lemma}
        \[V_i(\vec{s};F_i) = c_i\cdot \underset{b_{-i},m_{-i}}{E}[\underset{b_i,f_i,m_i}{E}[max\{p(g(b)=0|f_i,b_i,m_i),p(g(b)=1|f_i,b_i,m_i)\}]].\]
    \end{lemma}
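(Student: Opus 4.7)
The plan is to unroll the definition of $V_i$ as an expectation over $\Omega$ and then rewrite it via the tower property in terms of the agent's observable information $(b_i, m_i, f_i)$, at which point the Bayes-optimal decision rule $a_i$ collapses the conditional probability of a correct guess to the posterior maximum.

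First, I would start from $V_i(\vec{s};F_i) = E[v_i(b, m_i, f_i)]$, where the expectation is over the joint distribution on $\Omega$ induced by the prior on $b$, the strategy profile $\vec{s}$, and the protocol $F_i$. Using the definition of $v_i$, this is $c_i \cdot \Pr[a_i(b_i, m_i, f_i) = g(b)]$. I would then apply the law of total expectation, conditioning on the triple $(b_i, m_i, f_i)$ that the agent actually sees:
\[
V_i(\vec{s};F_i) = c_i \cdot \underset{b_i,m_i,f_i}{E}\bigl[\Pr(a_i(b_i,m_i,f_i)=g(b)\mid b_i,m_i,f_i)\bigr].
\]

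The key step is to evaluate the inner conditional probability. Given $(b_i, m_i, f_i)$, the guess $a_i$ is determined (or uniform over ties), and by its definition it selects the $k \in \{0,1\}$ that maximizes $p(g(b)=k \mid b_i,m_i,f_i)$. Hence
\[
\Pr(a_i(b_i,m_i,f_i)=g(b)\mid b_i,m_i,f_i) = \max\{p(g(b)=0\mid b_i,m_i,f_i),\, p(g(b)=1\mid b_i,m_i,f_i)\};
\]
in the tie case both posteriors equal $1/2$, so the equality still holds regardless of how $a_i$ breaks the tie. Substituting this back gives the stated formula up to the placement of the outer expectation.

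Finally, I would account for the outer expectation over $(b_{-i}, m_{-i})$ in the claim: since the integrand inside depends only on $(b_i, m_i, f_i)$, the tower property allows one to rewrite $E_{b_i, m_i, f_i}[\cdot] = E_{b_{-i}, m_{-i}}[E_{b_i, m_i, f_i}[\cdot]]$ without changing the value, matching the displayed expression exactly. The only mild subtlety is the tie-breaking case in the definition of $a_i$, but as noted above it contributes nothing since both posteriors coincide there; there is no real technical obstacle, and the lemma is essentially a bookkeeping restatement of the Bayes-optimality of $a_i$.
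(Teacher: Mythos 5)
Your proposal is correct and follows essentially the same route as the paper's proof: unroll $V_i$ as an expectation, condition on the agent's information $(b_i,m_i,f_i)$ via the tower property, and use the Bayes-optimality of $a_i$ (together with the conditional independence of the guess and $g(b)$ given that information, which handles the randomized tie-break) to identify the conditional success probability with the posterior maximum. The paper merely spells out the case analysis over which posterior is larger more explicitly than you do.
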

    
    The proof is quite technical and therefore appears in Appendix A.

    \section{Informational Profitability}
    
    In this section we address the concept of "more / better" information and show its relation to an agent's profit.
    Our goal is to make a connection between the agent's knowledge and her profit, and particularly to compare different replies given by the center.
    We prove here that there are cases in which an agent prefers in expectation one response over another. Understanding the influence of the center's replies on the agents' profits is a useful technical tool for the results presented in the following section.  
    
    Let $i\in \{1,\dots,N\}$ be an agent, let $F_i^1,F_i^2$ be two different protocols for agent $i$, let $\vec{s_1}=(s_1^1,\dots,s_N^1),\; \vec{s_2}=(s_1^2,\dots,s_N^2)$ be strategy profiles and let $m^1=(m_0^1,\dots,m_N^1), m^2=(m_0^2,\dots,m_N^2)$ be the agents' messages to the center according to $\vec{s_1},\vec{s_2}$ respectively (note that $m_0^1=m_0^2=b_0)$.
    Let $f_i^1, f_i^2$ be the center's replies drawn from the distributions  $F_i^1(\vec{s_1},m^1), F_i^2(\vec{s_2},m^2)$ respectively.
    In addition, assume that $y^1=(y_0^1,y_1^1,\dots,y_N^1)$ represents in each coordinate $k$ the probability that the $k^{th}$ agent has the value $1$ according to $i$'s inference from $f_i^1,m_i^1,b_i$. Formally:
    \[\forall k \in \{0,\dots,N\}: y^1_k = p(b_k = 1| f_i^1,m_i^1,b_i) = p(b_k = 1| f_{i,k}^1,m_i^1,b_i).\]
    The last equality is derived from the fact that $b_k$ and $f_{i,-k}^1$ are independent (recall that according to the definition of a protocol, $f_i^1=(f_{i,0}^1,\dots,f_{i,N}^1)$, such that each component in $f_i^1$ gets as input a message only from the corresponding agent).
    Similarly, assume that $y^2=(y_0^2,y_1^2,\dots,y_N^2)$ represents in each coordinate $k$ the probability that the $k^{th}$ agent has the value $1$ according to $i$'s inference from $f_i^2,m_i^2,b_i$. Formally:
    \[\forall k \in \{0,\dots,N\}: y^2_k = p(b_k = 1| f_i^2,m_i^2,b_i) = p(b_k = 1| f_{i,k}^2,m_i^2,b_i).\]
    Let $Y^1=(\underset{f_i^1,m_i^1,b_i}{E}[y^1_0],\dots,\underset{f_i^1,m_i^1,b_i}{E}[y^1_N]),\; Y^2=(\underset{f_i^2,m_i^2,b_i}{E}[y^2_0],\dots,\underset{f_i^2,m_i^2,b_i}{E}[y^2_N])$.
    
    % \paragraph*{\underline{Comparison Lemma}:}\mbox{}\\
    \begin{lemma}[Comparison Lemma] \label{Comparison Lemma}
       Let $i,j\in N$, where $i \neq j$.
        Assume that $Y^1_{-j} = Y^2_{-j}$ and $|Y_j^1-0.5| \geq |Y_j^2-0.5|$.
        Then,
        $V_i(\vec{s_1};F_i^1) \geq V_i(\vec{s_2};F_i^2)$.
    \end{lemma}

    \paragraph*{\underline{Intuition}:}\mbox{}\\
    An agent may have two different responses from the center $f_i^1,\:f_i^2$.
    She would like to know whether $f_i^1$ is better or worse than $f_i^2$ in terms of profit.
    Each response reveals some information about the agents' bits ($y^1$ or $y^2$) and therefore about the true value of $g$.
    Our observation relies on the distances of $y_j^1,\:y_j^2$ from $0.5$. Suppose that $y_j^1$'s distance from $0.5$ is greater than $y_j^2$'s distance from $0.5$, then $y_j^1$ reveals "more" information on $b_j$ and therefore on $g$ as well. As an intuitive example, let $y_j^1=0$ and let $y_j^2=0.5$. Then, according to $y_j^1$ we infer that $b_j=0$. On the other hand, according to $y_j^2$ we infer that $p(b_j=0|y_j^2)=p(b_j=1|y_j^2)=0.5$.
    Hence, $y_j^1$ is more informative than $y_j^2$.
    Therefore, one may expect that knowing $y_j^1$ over $y_j^2$ will also lead to a better guess on the social function's value. 
    But even though the intuition is quite clear, the claim is not immediate and true only in expectation. The proof of the Comparison Lemma appears in Appendix B. 
    
    We now approach the general case,  by applying the previous lemma several times until we get the desired result. 
    % This claim will consistently serve us later in the paper.

    % \paragraph*{\underline{Comparison Theorem}:}\mbox{}\\
    \begin{theorem}[Comparison Theorem] \label{Comparison Theorem}
    Assume that $\forall k\in \{0,\dots,N\}: |Y_k^1-0.5| \geq |Y_k^2-0.5|$. Then,
    $V_i(\vec{s_1};F_i^1) \geq V_i(\vec{s_2};F_i^2)$.
    \end{theorem}

    The proof of the Comparison Theorem appears in Appendix B as well.

    % We construct a strategy profile $p=(s_1^1,s_2^2,\dots,s_N^2)$ that yields the messages $b_0,m_1^1,m_2^2,\dots,m_N^2$, and a protocol $F_i=$ 
    
    % We construct a new protocol $F_i=(F_{i,0}^1,F_{i,1}^2,\dots,F_{i,N}^2)$.
    % In addition, we construct the strategy profile $p=(s_1^1,s_2^2,\dots,s_N^2)$ that yields the messages $b_0,m_1^1,m_2^2,\dots,m_N^2$.
    % Therefore, for any strategy profile $p$ and messages $m$ 

    % % \paragraph*{\underline{Proof}:}\mbox{}\\
    % \begin{proof}
    % There is a protocol $F_i$ and a strategy profile $p$ such that $p(b_0=1|F_i(p),b_i) = y_0^1$ and $\forall k\in \{1,\dots,N\}: p(b_k=1|F_i(p),b_i) = y_k^2$. Therefore, according to the comparison lemma $E[y_i(b,F_i(p))] \geq E[y_i(b,f_i^2)]$. 
    % Now, there is a protocol $F'_i$ and a strategy profile $p'$ such that $p(b_0=1|F'_i(p),b_i) = p(b_0=1|F_i(p),b_i),\; p(b_1=1|F'_i(p),b_i) = p(b_1=1|f_i^1,b_i) $ and $\forall k\in \{2,\dots,N\}: p(b_k=1|F'_i(p),b_i) = p(b_k=1|F_i(p),b_i)$. Therefore, according to the comparison lemma $E[y_i(b,F'_i(p))] \geq E[y_i(b,F_i(p))]$.
    % Hence, $E[y_i(b,F'_i(p))] \geq E[y_i(b,f_i^2)]$.
    % We will continue this process for the rest of the $\{3,\dots,N\}$ bits until we get that $E[y_i(b,f_i^1)] \geq E[y_i(b,f_i^2)]$ as we wanted to show.
    % \end{proof}
    % %  \hfill\(\blacksquare\)

    \section{The Competitive Protocol}

   This is the main section of the paper where we present the existence of a protocol that truthfully implements in unique equilibrium the computation of social functions, as well as prove its optimality.    
    We show that under some minor natural assumptions we achieve a protocol, titled as the {\em competitive protocol}, with the following properties:
    \begin{enumerate}
    \item Truthfulness is its only equilibrium.
    \item When the agents play according to the equilibrium, the center knows the true value of $g$.
   \item When the agents play according to the equilibrium, they know the true  value of $g$.
    \item This protocol is optimal - if there exists a protocol with a truthful equilibrium (for some fixed utility and social functions) then it is also a unique equilibrium of the competitive protocol. 
    \end{enumerate}

    We start with a definition. The \emph{relative price} $price_i^r: S_i \to [0,1]$ of agent $i$ is \[price_i^r(s_i) = \frac{price_i(s_i)}{price_i(s^t)}.\]
    Therefore,
    \[\forall i\in \{1,\dots,N\}, \forall s^r\in S^r, \forall s_i\in S_i-S^r: 1 = price_i^r(s^t) \geq price_i^r(s_i) > price_i^r(s^r) = 0.\]
    
    Recall that the function $price_i$ is continuous and therefore the relative price is continuous as well.

    % \underline{\emph{The competitive protocol}}:\\
    \begin{definition} [The competitive protocol]
    
    If all agents play $s^t$ they receive $m=b$.
    Else, the agent whose relative price is the highest (comparing to the others) receives $m=(b_0,m_1,\dots,m_N)$ while the others receive uniformly distributed random bits.
    If at least two agents incur the same highest relative price then everyone receives random bits.\footnote{The competitive protocol satisfies the properties above in case that $N\geq2$. If $N=1$ then it is easy to verify that a simple protocol that shares the center's bit if and only if the agent is truthful is the only protocol that reveals the true value of the social function for both the center and the agent.}$^,
    $\footnote{A viable criticism against the proposed protocol is that it requires that the center  is familiar with the agents' privacy functions. However, if the center replaces the real price function with the indicator price function - assigning the price zero whenever the agent uses a zero-information strategy and one otherwise - this criticism no longer holds, yet the new protocol will have the exact same equilibria as the original one.}
    \\\\$\forall i\in \{1,\dots,N\}, \forall s_1\in S_1,\dots,s_N\in S_N, \forall m_{-0}\in \{0,1\}^N:$\[ F_i^c(s_1,\dots,s_N,m)=
    \begin{cases}
    m & \text{if } s_1=\dots=s_N=s^t\\
    m & \text{if } price_i^r(s_i) > \underset{j \neq i}{max}\{price_j^r(s_j)\} \\
    \overrightarrow{0.5} & \text{otherwise}
    \end{cases}\]
    Note that the notation of $\overrightarrow{0.5}$ refers to uniformly distributed random bits.
    \end{definition}

    \paragraph*{\underline{Intuition}:}\mbox{}\\
    First, we want to construct a protocol with truthfulness as its only equilibrium. In order to implement it successfully, we make sure that in any other strategy profile there is at least one unsatisfied agent who prefers to deviate. We try to guarantee this property by creating a competition between the agents in every situation except from the truthful equilibrium. The competition is based on the relative prices that the agents incur. An agent who pays the highest relative price gets everything the center may offer while the others get nothing useful.
    For instance, assume a situation with two agents, each wants to get the center's information and pay as little as possible. If agent 1 knows agent's 2 strategy, then she knows agent's 2 relative price so she will pick a strategy with a slightly higher price in order to "win" and get the center's info. Agent 2 has the same considerations, yielding a situation that each prefers to slightly change her strategy trying to overcome the other agent.
    They both may be satisfied when they play truthfully and get all the information.\footnote{This process is similar to the concept of Bertrand competition from the field of economics.}
    
    % \underline{Reminder}:
    % \[\forall i\in N, \forall s^r\in S^r, \forall s_i\in S_i-S^r: price_i(s_i)> price_i(s^r) = 0.\]
    % In addition, $\forall i\in N\; price_i$ is continuous.
    
    One may wonder why we suggest a complex protocol where there is an intuitive protocol that seems to deliver the same results.
    The initial idea that comes to mind is a protocol that suggests all of the center's information $(b_0,m_{-0})$ for truthful agents and nothing ($\overrightarrow{0.5}$) otherwise. One may believe that the strictness of the protocol is the best way to incentivize the agents to be honest. 
    Unfortunately, this straightforward approach has a major flaw that the competitive protocol circumvents - it allows an additional equilibrium, one where zero-information strategies are played by all agents. Indeed, an agent may not be interested in paying a full privacy cost in exchange for the center's private information, by deviating to truthfulness.
    Furthermore, such zero-information equilibrium may be more stable than the truthful equilibrium in various settings, in the following sense: a deviation of one agent from the truthful equilibrium may lead others deviate as well, possibly leading to the zero-information equilibrium.
    
    In contrast, such a zero-information strategy profile does not form an equilibrium in the competitive protocol: any single agent can profitably deviate by suffering a negligible cost and obtain the center's private information. Indeed, as we will show, under natural conditions truthfulness is a unique equilibrium in our protocol.   
    % Furthermore, notice that such zero-information strategy profile is more stable than the truthful strategy profile in the following sense: 

    \subsection{Uniqueness}
    
    We now turn to show that truthfulness is the unique equilibrium of the competitive protocol.  
    % that apply for every protocol which satisfy uniqueness.
    Our proof will make use of the following definition of protocol improvement and its property in the Lemma following it, building on the technical comparison Theorem presented in the previous section.

    Suppose that $F_i$ is a protocol and $\vec{s}$ is a strategy profile. Recall that the first coordinate of $F_i(\vec{s},m)$ (i.e. $F_{i,0}(\vec{s},m_0)$) may reveal information only about the center's bit.
    
    Let $F_i$ be a protocol for agent $i$ and let us define the improved protocol, denoted by $F_i^+$, which returns in the first coordinate the center's bit if $s_i$ is not a zero-information strategy.
    Formally: 
    
    \begin{definition} \label{Improvement Definition}
    % Let $F_i$ be a protocol such that:
    % \[\forall s_1\in S_1,\dots, s_N\in S_N, \exists val_0,val_1,\dots,val_N\in \{0,1\}^{N+1}:
    % F_i(s_1,\dots,s_N)=(val_0,val_1,\dots,val_N)\]
    % let $F_i^+$ be the \emph{improvement} of $F_i$ such that:
    % \[F_i^+(s_1,\dots,s_N)=
    % \begin{cases}
    % (val_0,val_1,\dots,val_N) & \text{if } s_i\in S^r\\
    % (b_0,val_1,\dots,val_N) & otherwise
    % \end{cases}\]
    Let $F_i=(F_{i,0},\dots,F_{i,N})$ be a protocol. Hence,
    \begin{multline*}
        \forall s_1\in S_1,\dots, s_N\in S_N,  \forall m_{-0}\in \{0,1\}^N:\\ F_i^+(s_1,\dots,s_N,m)=
    \begin{cases}
    F_i(s_1,\dots,s_N,m) & \text{if } s_i\in S^r\\
    (m_0,F_{i,-0}(s_1,\dots,s_N,m_{-0})) & otherwise
    \end{cases}
    \end{multline*}
    \end{definition}

    Recall that $S^r$ is the set of zero-information strategies.
    
    For every strategy profile $\vec{s}$, $F_i$'s improvement ($F_i^+$) fully reveals the center's bit $m_0=b_0$ instead of giving $F_{i,0}(\vec{s},m_0)$ (which may contain a partial or full information on $b_0$) when the agent does not play a zero-information strategy.
    
    Now we approach two technical lemmas that their proofs appear in Appendix C. Lemma \ref{Lemma 1} shows that the agents' utility from an improved protocol is at least as high as their utility from the original protocol. Lemma \ref{Better than Nothing} shows that $\overrightarrow{0.5}$ is the worst reply that an agent may get.
    % Note that this override of $F_{i,0}(p)$ doesn't influence the agent's knowledge on $b_{-0}$ because $F_{i,0}(p)$ and $b_{-0}$ are independent (Recall that $F_{i,0}$ gets $p,b_0$ only and the bits are $i.i.d$).

    % \paragraph*{\underline{Lemma 1}:}\mbox{}\\
    \begin{lemma} \label{Lemma 1}
        For every agent $i$ and her corresponding protocol $F_i$:
        \[\forall s_1\in S_1,\dots, s_N\in S_N: U_i(s_1,\dots,s_N;F_i^+) \geq U_i(s_1,\dots,s_N;F_i).\]
    \end{lemma}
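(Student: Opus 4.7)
The plan is to reduce this lemma to the Comparison Theorem (Theorem \ref{Comparison Theorem}). Since the agent's strategy $s_i$ is identical on both sides of the claimed inequality, the privacy cost $price_i(s_i)$ cancels, so it suffices to prove $V_i(\vec{s};F_i^+) \geq V_i(\vec{s};F_i)$ and then add back the common $-price_i(s_i)$. I would split on the two branches in Definition \ref{Improvement Definition}.

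The case $s_i \in S^r$ is immediate: by definition $F_i^+$ coincides with $F_i$ on every input, so the two profits (and hence the two utilities) are equal. In the remaining case $s_i \notin S^r$, I would instantiate Theorem \ref{Comparison Theorem} with $\vec{s_1} = \vec{s_2} = \vec{s}$, $F_i^1 = F_i^+$, and $F_i^2 = F_i$, and verify the coordinate-wise hypothesis. Since the improvement only rewrites the zeroth slot, the component protocols $F_{i,k}^+$ and $F_{i,k}$ coincide for every $k \in \{1,\dots,N\}$; I would couple the randomness of the two replies so that $f_{i,k}^1 = f_{i,k}^2$ for all such $k$. Combined with the paper's own observation that $b_k$ is independent of $f_{i,-k}$, this coupling yields $v_k^1 = v_k^2$, hence $|v_k^1 - 0.5| = |v_k^2 - 0.5|$ for every $k \geq 1$. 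At the slot $k = 0$, the improved protocol emits $m_0 = b_0$ deterministically, so $v_0^1 \in \{0,1\}$ and $|v_0^1 - 0.5| = \frac{1}{2}$, while $v_0^2 \in [0,1]$ automatically satisfies $|v_0^2 - 0.5| \leq \frac{1}{2}$. The hypothesis of the Comparison Theorem is therefore met on every coordinate, and its conclusion gives $V_i(\vec{s};F_i^+) \geq V_i(\vec{s};F_i)$, as required.

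The only spot where I would slow down is verifying that Theorem \ref{Comparison Theorem} is legitimately applicable via a coupling of the two protocols' randomness (rather than requiring the inequality on the $v^k$'s to hold in some unconditional distributional sense), and spelling out that on the non-zero-information branch of Definition \ref{Improvement Definition} the zeroth entry of $F_i^+$ really is the deterministic bit $m_0 = b_0$ (rather than some randomized refinement of it). Once these two small bookkeeping points are pinned down, the proof collapses to a one-line invocation of the theorem.
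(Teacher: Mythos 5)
Your proposal is correct and follows essentially the same route as the paper's proof: split on whether $s_i$ is zero-information, observe that in the non-trivial case only the zeroth coordinate changes and becomes fully informative ($|v_0-0.5|=0.5$) while the remaining coordinates are untouched, and invoke the comparison machinery (the paper cites the Comparison Lemma for the single changed coordinate, you cite the Comparison Theorem; the two are interchangeable here). The coupling subtlety you flag is real but is glossed over in the paper's own proof at exactly the same level of informality.
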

   
    % We remind that $\Omega = \{(v_0,v_1,\dots,v_N)|v_0,v_1,\dots,v_N \in [0,1]\}$.
    % We denote $\sigma_i$ as the agent's $\sigma-algebra$ from $F_i$, and $\sigma_i^+$ as the agent's $\sigma-algebra$ from $F_i^+$. Let $A$ be a subset of $\Omega$ such that $A \in \sigma_i$. Therefore, $A \in \sigma_i^+$. Additionally, if there are at least two vectors $x,y \in A$ such that $v_0$ in $x$ is different from $v_0$ in $y$, then by splitting $A$ for $A_0$ and $A_1$ according to $v_0$'s value we get that $A,A_1,A_2\in \sigma_i^+$.
    % Thus, $\sigma_i \subseteq \sigma_i^+$.

    % $\{(v_0,v_1,\dots,v_N)|v_0,v_1,\dots,v_N \in [0,1]\} \subseteq \sigma_i$ appears also in $\sigma_i^+$ in addition to the subsets $\{(1,v_1,\dots,v_N)|v_1,\dots,v_N \in [0,1]\}$ and $\{(0,v_1,\dots,v_N)|v_1,\dots,v_N \in [0,1]\}$.  splitted   In addition, if we look at the profile $(s^r,(s^t\dots,s^t)) \; \forall s^r \in S^r$, agent's $i$ $\sigma-algebra$ $\sigma_A$ satisfies: $\{\{\},\{(v_1,\dots,v_{i-1},0,v_{i+1},\dots,v_N)|v_1,\dots,v_N\in [0,1]\},\\\{(v_1,\dots,v_{i-1},1,v_{i+1},\dots,v_N)|v_1,\dots,v_N\in [0,1]\},\Omega\} \subseteq \sigma_A$.

    % $E[u_i(s_1,\dots,s_N;F_i^+)] \geq E[u_i(s_1,\dots,s_N;F_i)]$.
    
    % so in $F_i$ his $\sigma-algebra$ is $\sigma_{F_i}=\{\{\},\{(v_1,\dots,v_{i-1},0,v_{i+1},\dots,v_N)|v_1,\dots,v_N\in [0,1]\},\{(v_1,\dots,v_{i-1},1,v_{i+1},\dots,v_N)|v_1,\dots,v_N\in [0,1]\},\Omega\}$.
    % But in $F_i^+$ his $\sigma-algebra$ is
    % $\sigma_{F_i^+}=$

    \begin{lemma} \label{Better than Nothing}
        Let $F_i^1,F_i^2$ be protocols for agent $i$. Let $\vec{s_1},\vec{s_2}$ and $m^1,m^2$ be the strategy profiles and the agents' messages in those protocols respectively. If $F_i^2(\vec{s_2},m^2)=\overrightarrow{0.5}$ then $V_i(\vec{s_1};F_i^1) \geq V_i(\vec{s_2};F_i^2)$.
    \end{lemma}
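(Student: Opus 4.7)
The plan is to reduce the statement to the Comparison Theorem by explicitly computing the posterior vectors $v^1$ and $v^2$ and verifying, coordinate by coordinate, the required domination of their distances from $0.5$.

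First I would analyze $v^2$. Under $F_i^2(\vec{s_2},m^2)=\overrightarrow{0.5}$ the reply $f_i^2$ is a vector of independent uniform bits that is independent of every other random variable in the model. For $k=i$ the agent directly observes her own bit, so
\[
v_i^2 \;=\; p(b_i=1 \mid f_i^2,m_i^2,b_i) \;=\; b_i \;\in\; \{0,1\},
\]
and therefore $|v_i^2-0.5|=0.5$. For $k\neq i$, the conditioning triple $(f_i^2,m_i^2,b_i)$ is independent of $b_k$: the reply $f_i^2$ is independent of everything by construction, the message $m_i^2$ is generated from $b_i$ alone via agent $i$'s strategy, and the prior bits are i.i.d.\ uniform. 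Hence $v_k^2=p(b_k=1)=0.5$ and $|v_k^2-0.5|=0$.

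Next I would verify the required domination for any protocol $F_i^1$. Because the agent always knows her own bit, for any protocol one has $v_i^1=b_i$, and so $|v_i^1-0.5|=0.5=|v_i^2-0.5|$. For $k\neq i$ the trivial bound $v_k^1\in[0,1]$ yields $|v_k^1-0.5|\geq 0=|v_k^2-0.5|$. Thus the inequality $|v_k^1-0.5|\geq|v_k^2-0.5|$ holds simultaneously for every $k\in\{0,\dots,N\}$ and for every realization of the underlying variables, which is exactly the hypothesis of the Comparison Theorem. Invoking that theorem then yields $V_i(\vec{s_1};F_i^1)\geq V_i(\vec{s_2};F_i^2)$.

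The only real obstacle is making the independence argument for $v_k^2$ fully airtight: one must carefully use that the bits are i.i.d.\ uniform and that $m_i^2$ depends on the bit vector only through $b_i$, so that together with the uninformative $f_i^2$ the conditioning collapses the posterior on $b_k$ ($k\neq i$) back to the prior. Once this is in place, the rest is a direct, essentially mechanical, application of the previously established comparison machinery.
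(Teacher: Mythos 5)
Your proposal is correct and follows essentially the same route as the paper: both arguments verify the coordinate-wise hypothesis of the Comparison Theorem — distance $0.5$ in coordinate $i$ on both sides since the agent knows her own bit, and $|v_k^1-0.5|\geq 0=|v_k^2-0.5|$ for $k\neq i$ since the uniform reply leaves the posterior on $b_k$ at the prior — and then invoke that theorem. Your added care in justifying why the conditioning collapses $v_k^2$ to the prior is a fair elaboration of a step the paper treats as immediate, but it is not a different proof.
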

    
    Now we approach the assumption that we have mentioned before.

    % \underline{Definition}:
    \begin{definition} \label{Helpless Definition}
    We call the center \emph{helpless} if for every agent $i$ and every protocol $F_i$:
    \[\forall s_i \in S_i-S^r, \forall s_1^r,\dots,s_N^r\in S^r:
    U_i(s_i^r,s_{-i}^r;F_i) \geq U_i(s_i,s_{-i}^r;F_i^+).\]
    \end{definition}

    The center is helpless if she doesn't have the capability to prevent an equilibrium of zero-information strategies.
    In case of a zero-information strategy profile, all the center may offer is her bit. If there isn't any agent in this profile who prefers to deviate from a zero-information strategy in order to receive the center's information, then we call the center helpless.

    \begin{example} \label{Helpless Example}
    Suppose that $N=2$ and let $g(b)=\begin{cases}
    b_0 & \text{if} \hspace{0.2cm} xor(b_1,b_2)=0\\
    \bar{b_0} & \text{if} \hspace{0.2cm} xor(b_1,b_2)=1
    \end{cases}$.\\
    
    The reader may verify that the center is helpless. In a situation of a zero-information strategy profile notice that the center's bit is not useful for the agents without information on the other agent's bit. Therefore, regardless of the center's protocol the agents won't deviate from a zero information strategy in this profile.
    Note that in this example the center's information is essential for the exact computation of $g$ even though she is helpless.
    \end{example}

    % \paragraph*{\underline{Lemma 2}:}\mbox{}\\
    \begin{lemma} \label{Lemma 2}
        The center is helpless if and only if all the protocols have the equilibria $(s_1^r,\dots,s_N^r) \; \forall s_1^r,\dots,s_N^r \in S^r$.
    \end{lemma}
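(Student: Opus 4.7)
The plan is to prove the two implications separately, using the central observation from Definition \ref{Improvement Definition} that $F_i^+$ coincides with $F_i$ precisely when $s_i \in S^r$, so the ``improvement'' activates only on deviations away from zero-information.

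For the ($\Leftarrow$) direction, suppose every zero-information profile is an equilibrium in every protocol. Fix $i$, a protocol $F_i$, a strategy $s_i \in S_i - S^r$, and zero-information strategies $s_i^r, s_{-i}^r \in S^r$. The improved protocol $F^+ = (F_1^+, \dots, F_N^+)$ is itself a valid protocol, so by hypothesis the profile $(s_i^r, s_{-i}^r)$ is an equilibrium in $F^+$, which gives $U_i(s_i^r, s_{-i}^r; F_i^+) \geq U_i(s_i, s_{-i}^r; F_i^+)$. Because $s_i^r \in S^r$, Definition \ref{Improvement Definition} gives $F_i^+(s_i^r, s_{-i}^r, m) = F_i(s_i^r, s_{-i}^r, m)$, so the left-hand side equals $U_i(s_i^r, s_{-i}^r; F_i)$; chaining these yields exactly the helpless inequality.

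For the ($\Rightarrow$) direction, assume the center is helpless. Fix a protocol $F$ and a zero-information profile $(s_1^r, \dots, s_N^r)$, and verify that no agent $i$ benefits from any deviation $s_i' \in S_i$. If $s_i' \in S_i - S^r$, combine helplessness applied to $F_i$ with Lemma \ref{Lemma 1} to obtain
\[ U_i(s_i^r, s_{-i}^r; F_i) \geq U_i(s_i', s_{-i}^r; F_i^+) \geq U_i(s_i', s_{-i}^r; F_i). \]
If $s_i' \in S^r$ I approximate $s_i'$ by a sequence $s_i^{(n)} \in S_i - S^r$ with $s_i^{(n)} \to s_i'$, invoke helplessness at each term, and pass to the limit using continuity of $price_i$ (so the cost vanishes) and continuity of $V_i$ in $s_i$. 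The limit of $U_i(s_i^{(n)}, s_{-i}^r; F_i^+)$ equals $V_i(s_i', s_{-i}^r; \hat F_i)$, where $\hat F_i(s,m) = (m_0, F_{i,-0}(s, m_{-0}))$ is the auxiliary protocol that always discloses $b_0$ fully; by Theorem \ref{Comparison Theorem}, this quantity is at least $V_i(s_i', s_{-i}^r; F_i) = U_i(s_i', s_{-i}^r; F_i)$, since $\hat F_i$ reveals $b_0$ exactly while coinciding with $F_i$ on every other coordinate.

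The main obstacle is the deviation-into-$S^r$ case in the ($\Rightarrow$) direction: helplessness is stated only against deviations outside $S^r$, and Lemma \ref{Lemma 1} is vacuous at zero-information strategies (where $F_i^+ = F_i$), so a bridging argument is unavoidable. The limiting construction above is that bridge, reducing a zero-information deviation to a limit of non-zero-information ones, at the cost of invoking continuity of the profit function in $s_i$ together with the Comparison Theorem to pass from the fully-revealed $b_0$ in the limit protocol $\hat F_i$ back to the original $F_{i,0}$.
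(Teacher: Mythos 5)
Your proposal is correct, and its core coincides with the paper's argument: the backward direction applies the equilibrium hypothesis to the improved protocol $F^+$ (itself a legitimate protocol) and uses that $F_i^+$ agrees with $F_i$ at zero-information strategies, while the forward direction chains the helplessness inequality with Lemma \ref{Lemma 1}. Where you genuinely go beyond the paper is in the forward direction's treatment of deviations $s_i'\in S^r$: the paper's proof only rules out deviations to $S_i-S^r$ and is silent on deviations from one zero-information strategy to another, a case that is not vacuous because a protocol may condition its reply on \emph{which} element of $S^r$ agent $i$ submits (e.g.\ reveal $b_0$ only for $s_i=(0.3,0.3)$). Your limiting argument closes that gap, and the conclusion is right, but one justification should be repaired: ``continuity of $V_i$ in $s_i$'' is not available in general, since $F_{i,-0}$ may depend discontinuously on $\vec{s}$. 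What actually makes the limit work is that when all opponents play zero-information strategies the coordinates $f_{i,k}$ for $k\neq 0$ carry no information about $b_{-i}$, so $V_i(s_i^{(n)},s_{-i}^r;F_i^+)$ equals the constant $c_i\, E\left[\max\{p(g=0\mid b_0,b_i),\,p(g=1\mid b_0,b_i)\}\right]$ for every $n$, which is also the value of $V_i(s_i',s_{-i}^r;\hat F_i)$; with that substitution in place of the continuity claim, your argument is complete and in fact strictly more careful than the paper's own proof of this direction.
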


    % \paragraph*{\underline{Proof - Direction 1}:}\mbox{}\\
    \begin{proof}[Direction 1]
    Assume that the center is helpless, and let $F_i$ be an arbitrary protocol for agent $i$.
    We show that the profile $(s_1^r,\dots,s_N^r)$ is an equilibrium in $F_i$ for all $s_1^r,\dots,s_N^r \in S^r$:
    \[    \forall s_i \in S_i-S^r, \forall s_1^r,\dots,s_N^r\in S^r:\;
    U_i(s_i^r,s_{-i}^r;F_i) \geq
    U_i(s_i,s_{-i}^r;F_i^+) \geq
    U_i(s_i,s_{-i}^r;F_i).\]
    Thus, agent $i$ doesn't want to deviate to any $s_i \in S_i-S^r$.\hfill\(\blacksquare\)
    
    \end{proof}

    % \paragraph*{\underline{Proof - Direction 2}:}\mbox{}\\
    \begin{proof}[Direction 2]
    Suppose that all the protocols have the equilibria $(s_1^r,\dots,s_N^r)$ for all $s_1^r,\dots,s_N^r \in S^r$, and assume in contradiction that the center is not helpless. Therefore, there is an agent $i$ and a protocol $F_i$ such that $\exists s_i \in S_i-S^r, \forall s_1^r,\dots,s_N^r\in S^r: U_i(s_i,s_{-i}^r;F_i^+) > U_i(s_i^r,s_{-i}^r;F_i).$
    But $\forall s_1^r,\dots,s_N^r\in S^r: U_i(s_i^r,s_{-i}^r;F_i^+) = U_i(s_i^r,s_{-i}^r;F_i)$
    and therefore,
    $\exists s_i \in S_i-S^r, \forall s_1^r,\dots,s_N^r\in S^r: U_i(s_i,s_{-i}^r;F_i^+) > U_i(s_i^r,s_{-i}^r;F_i^+)$
    meaning that $\forall s_1^r,\dots,s_N^r\in S^r: (s_1^r,\dots,s_N^r)$ is not an equilibrium in $F_i^+$.\hfill\(\blacksquare\)
    \end{proof}

    % and let $F$ be a protocol that gives the center's bit for all the agents who play $s\notin S^r$. Thus:
    % \begin{multline*}
    % \exists i \in N, \exists s_i \in S_i-S^r, \forall s^r\in S^r: \\
    % E[u_i(s_i,(s^r,\dots,s^r))\:|\:i \; knows \; b_0] > E[u_i(s^r,\dots,s^r)].
    % \end{multline*}
    % But also,
    % \begin{multline*}
    % \exists i \in N, \exists s_i \in S_i-S^r, \forall s^r\in S^r: \\
    % E[u_i(f_i^A(s_i,(s^r,\dots,s^r)))] = E[u_i(s_i,(s^r,\dots,s^r))\:|\:i \; knows \; b_0] > E[u_i(s^r,\dots,s^r)].
    % \end{multline*}
    % So agent $i$ prefers to deviate to $s_i$ in the protocol $A$, meaning that $(s^r,\dots,s^r) \; \forall s^r \in S^r$ are not equilibria.

    % \paragraph*{\underline{Conclusion 1}:}\mbox{}\\
    \begin{corollary} \label{Corollary 1}
    If the center is helpless, there isn't any protocol with truthfulness as a unique equilibrium.
    \end{corollary}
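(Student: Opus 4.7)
The plan is to derive this essentially as an immediate consequence of Lemma \ref{Lemma 2} together with the observation that truthful strategies are disjoint from zero-information strategies.

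First, I would invoke Lemma \ref{Lemma 2} in its forward direction: if the center is helpless, then for every protocol $F$ and every choice of zero-information strategies $s_1^r, \dots, s_N^r \in S^r$, the profile $(s_1^r, \dots, s_N^r)$ is an equilibrium in $F$. In particular, at least one such zero-information equilibrium always exists in any protocol.

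Next, I would observe that the truthful strategy $s^t = (1,0)$ does not belong to $S^r = \{(c,c) \mid c \in [0,1]\}$, since its two coordinates differ. Consequently the truthful profile $(s^t, \dots, s^t)$ is distinct from any zero-information profile $(s_1^r, \dots, s_N^r)$. Putting these two facts together: for any protocol $F$, there is an equilibrium (namely any zero-information profile) that differs from the truthful profile. Hence, whether or not truthfulness itself happens to be an equilibrium of $F$, it cannot be the \emph{unique} equilibrium of $F$. This yields the corollary.

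I do not expect any real obstacle here, since Lemma \ref{Lemma 2} has already done the conceptual work; the only thing to verify is the trivial fact $s^t \notin S^r$, which rules out coincidence between the truthful profile and the guaranteed zero-information equilibria. The statement is, in effect, a one-line corollary of Lemma \ref{Lemma 2}.
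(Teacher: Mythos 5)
Your proof is correct and matches the paper's intended argument: the corollary is stated immediately after Lemma \ref{Lemma 2} precisely because its forward direction guarantees a zero-information equilibrium in every protocol, and since $s^t=(1,0)\notin S^r$ this equilibrium differs from the truthful profile, so truthfulness cannot be unique. Nothing further is needed.
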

    
    Hence, we must assume for our desired protocol that the center is not helpless. The following Lemma demonstrates the strength of the competitive protocol. 
    
    % \paragraph*{\underline{Lemma 3}:}\mbox{}\\
    \begin{lemma} \label{Lemma 3}
         The center is helpless if and only if $\;\forall s_1^r,\dots,s_N^r \in S^r: (s_1^r,\dots,s_N^r)$ is an equilibrium in the competitive protocol.
    \end{lemma}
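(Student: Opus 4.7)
The proof naturally splits along the biconditional. The forward direction ($\Rightarrow$) is immediate: if the center is helpless, then by Lemma \ref{Lemma 2} every protocol, and in particular the competitive protocol $F^c$, has every zero-information profile as an equilibrium.

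For the reverse direction ($\Leftarrow$), I fix an agent $i$, a protocol $F_i$, a strategy $s_i \in S_i - S^r$, and a zero-information profile $(s_1^r,\dots,s_N^r)$, and aim to establish the defining inequality of helplessness,
\[
U_i(s_i^r, s_{-i}^r; F_i) \;\geq\; U_i(s_i, s_{-i}^r; F_i^+),
\]
via the three-step chain
\[
U_i(s_i^r, s_{-i}^r; F_i) \;\geq\; U_i(s_i^r, s_{-i}^r; F_i^c) \;\geq\; U_i(s_i, s_{-i}^r; F_i^c) \;=\; U_i(s_i, s_{-i}^r; F_i^+).
\]
The first link uses Lemma \ref{Better than Nothing}: under the all-zero-information profile, the competitive protocol returns $\overrightarrow{0.5}$ (its tie-breaking rule applies, since every relative price is $0$), so the profit in $F_i$ is at least as large; both sides carry zero price, so the inequality lifts to utilities. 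The middle link is precisely the equilibrium hypothesis on $F^c$.

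The substantive step is the final equality. When agent $i$ deviates with $s_i \in S_i - S^r$ against zero-information opponents, her relative price $price_i^r(s_i) > 0$ strictly exceeds the opponents' relative prices of $0$, so in $F^c$ she uniquely ``wins'' and receives $(b_0, m_1, \dots, m_N)$; in $F_i^+$ she receives $(b_0, F_{i,1}(\vec{s}, m_1), \dots, F_{i,N}(\vec{s}, m_N))$. I will argue that both observations induce the same posterior on $b$: in both cases $b_0$ is revealed exactly, $b_i$ is known trivially, and for each $j \notin \{0, i\}$ the zero-information strategy $s_j$ makes $m_j$ independent of $b_j$, so $F_{i,j}(\vec{s}, m_j)$ is independent of $b_j$ as well, leaving the posterior on $b_j$ equal to the prior $1/2$ in both protocols. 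By Lemma \ref{Profit Lemma} the expected profits then coincide (one may equivalently apply the Comparison Theorem in both directions using $|v^c_k - 0.5| = |v^+_k - 0.5|$), and since the strategy $s_i$ is fixed the prices match, yielding equality of utilities.

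The main obstacle in the plan is this posterior-equivalence argument; the crucial structural observation is that $F_{i,j}$ is specified to depend only on the single message $m_j$, which under zero-information strategies at the opposing coordinates blocks any information about $b_j$ from reaching agent $i$ through $f_{i,j}$. Once this is granted, the remaining links of the chain are routine compositions of the earlier lemmas.
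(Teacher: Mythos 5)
Your proposal is correct and follows essentially the same route as the paper: the forward direction cites Lemma \ref{Lemma 2}, and the reverse direction uses the same chain (Lemma \ref{Better than Nothing} to pass from $F_i$ to $F_i^c$ under the all-zero-information profile, the equilibrium hypothesis in $F^c$, and the observation that against zero-information opponents any improved protocol conveys exactly $b_0$ and nothing more about $b_{-0,-i}$). The only cosmetic difference is that the paper routes the final equality through ${F_i^c}^+$ and a general statement that all improved protocols coincide at such profiles, whereas you equate $F_i^c$ and $F_i^+$ directly via the same posterior computation.
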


     % \paragraph*{\underline{Proof - Direction 1}:}\mbox{}\\
    \begin{proof}[Direction 1]
    According to Lemma \ref{Lemma 2} the claim is true.\hfill\(\blacksquare\)
    \end{proof}

    % \paragraph*{\underline{Proof - Direction 2}:}\mbox{}\\
    \begin{proof}[Direction 2]
    $\forall s_1^r,\dots,s_N^r \in S^r: (s_1^r,\dots,s_N^r)$ is an equilibrium in the competitive protocol. Thus:
    \[\forall i \in \{1,\dots,N\}, \forall s_i \in S_i-S^r, \forall s_1^r,\dots,s_N^r \in S^r: \;
    U_i(s_i^r,s_{-i}^r;F_i^c) \geq U_i(s_i,s_{-i}^r;F_i^c) = U_i(s_i,s_{-i}^r;{F_i^c}^+).\]
    % The inequality is derived from the equilibrium.
    The equality is derived from the definition of ${F_i^c}^+$ and the fact that agent $i$ pays the most and therefore receives $b_0$ also in $F_i^c$.\\
    Now, let $A_i$ be an arbitrary protocol for agent $i$. Recall that $price_i^r(s_i^r)=0$ so even if agent $i$ pays the highest relative price, all the other agents pay 0 as well.
    Therefore, $\forall m_{-0}\in \{0,1\}^N: \; F_i^c(s_1^r,\dots,s_N^r,m)=\overrightarrow{0.5}$.\\
    Hence, according to Lemma \ref{Better than Nothing},
    $V_i(s_i^r,s_{-i}^r;A_i) \geq V_i(s_i^r,s_{-i}^r;F_i^c)$ so $U_i(s_i^r,s_{-i}^r;A_i) \geq U_i(s_i^r,s_{-i}^r;F_i^c)$.\\
    In addition, for each pair of protocols $P_i,Q_i$: \[\forall s_i \in S_i-S^r: U_i(s_i,s_{-i}^r;P_i^+)=U_i(s_i,s_{-i}^r;Q_i^+).\]
    It stems from the fact that all the information that the center has in such strategy profiles about the agents' private bits is her own bit (which she shares) and some information on $b_i$ (which the agent already knows).\\
    In particular,
    $U_i(s_i,s_{-i}^r;{F_i^c}^+)=U_i(s_i,s_{-i}^r;A_i^+)$.
    Therefore,
        \[U_i(s_i^r,s_{-i}^r;A_i) \geq U_i(s_i^r,s_{-i}^r;F_i^c) \geq U_i(s_i,s_{-i}^r;{F_i^c}^+) = U_i(s_i,s_{-i}^r;A_i^+)\] for every protocol $A_i$, meaning that the center is helpless.\hfill\(\blacksquare\)
    \end{proof}

    % \paragraph*{\underline{Conclusion 2}:}\mbox{}\\
    \begin{corollary} \label{Corollary 2}
    The profiles $\{(s_1^r,\dots,s_N^r)|s_1^r,\dots,s_N^r \in S^r\}$ are equilibria in the competitive protocol if and only if they are equilibria in every protocol.
    \end{corollary}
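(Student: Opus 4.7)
The plan is to read Corollary \ref{Corollary 2} as an immediate chaining of Lemma \ref{Lemma 2} and Lemma \ref{Lemma 3}, both of which are already proven, via the intermediate notion of a \emph{helpless} center. Both lemmas give ``iff'' characterizations: Lemma \ref{Lemma 2} says the center is helpless iff every zero-information strategy profile is an equilibrium in \emph{every} protocol, and Lemma \ref{Lemma 3} says the center is helpless iff every zero-information strategy profile is an equilibrium in the specific \emph{competitive} protocol. Since both right-hand sides are equivalent to the same statement about the center, they are equivalent to each other, which is exactly Corollary \ref{Corollary 2}.

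Concretely, for the direction ($\Rightarrow$), I would assume that $(s_1^r,\dots,s_N^r)$ is an equilibrium in the competitive protocol for every $s_1^r,\dots,s_N^r \in S^r$, invoke the ``if'' half of Lemma \ref{Lemma 3} to conclude that the center is helpless, and then invoke the ``only if'' half of Lemma \ref{Lemma 2} to conclude that each such profile is an equilibrium in every protocol. For the direction ($\Leftarrow$), the cleanest argument is simply that the competitive protocol is itself a protocol, so any profile that is an equilibrium in every protocol is in particular an equilibrium in the competitive one; equivalently, one can mirror the forward direction using the other halves of the two iffs, going ``equilibrium in every protocol $\Rightarrow$ helpless (Lemma \ref{Lemma 2}) $\Rightarrow$ equilibrium in the competitive protocol (Lemma \ref{Lemma 3}).''

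There is no substantive obstacle here, since the technical content resides in Lemmas \ref{Lemma 2} and \ref{Lemma 3}. The only point requiring mild attention is getting the quantifiers right: the statement ranges over \emph{all} tuples $s_1^r,\dots,s_N^r \in S^r$ simultaneously, which matches exactly the universally quantified formulations used in Lemmas \ref{Lemma 2} and \ref{Lemma 3}, so the chaining is clean and no extra case analysis is needed.
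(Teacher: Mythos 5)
Your proposal is correct and matches the paper exactly: the paper itself states that Corollary \ref{Corollary 2} follows by combining Lemma \ref{Lemma 2} and Lemma \ref{Lemma 3} through the intermediate notion of a helpless center, which is precisely your chaining argument. Your additional observation that the backward direction is also immediate because the competitive protocol is itself a protocol is a valid (if redundant) shortcut.
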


    The combination of Lemma \ref{Lemma 2} and Lemma \ref{Lemma 3} induces the above conclusion. The main use of these Lemmas is to show the equilibrium uniqueness of the protocol (under not helpless center). However, we notice that corollary \ref{Corollary 2} implies the protocol's optimality that we will show later in this paper.

    % \paragraph*{\underline{Lemma 4}:}\mbox{} \\
    \begin{lemma} \label{Lemma 4}
        The competitive protocol has only two kinds of equilibria:  $(s^t,\dots,s^t)$ or $(s_1^r,\dots,s_N^r)$ $\forall s_1^r,\dots,s_N^r \in S^r$. 
    \end{lemma}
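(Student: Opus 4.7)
The plan is to show that in any equilibrium $\vec{s}$ of the competitive protocol, either all agents play $s^t$ or every agent plays a strategy in $S^r$. I split the argument according to what the center's reply looks like under $\vec{s}$, and in each non-trivial case I exhibit a strictly profitable deviation.

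Suppose first that $\vec{s}$ is an equilibrium in which some agent $j$ plays $s_j \notin S^r$ but $j$ is not the unique argmax of $price_i^r(\cdot)$. Then by construction of the competitive protocol, $j$ receives $f_j = \overrightarrow{0.5}$. Since $m_j$ is, given $b_j$, conditionally independent of every other bit, her posterior on $g(b)$ depends only on $b_j$, so her profit $V_j$ equals a constant $V_j^0$ independent of $s_j$ (a direct consequence of Lemma~\ref{Better than Nothing}). Now consider the deviation $s_j \to s_j' \in S^r$: $j$'s relative price drops to $0$, so $j$ cannot become the unique argmax and hence continues to receive $\overrightarrow{0.5}$; her profit stays at $V_j^0$ while her price drops from $price_j(s_j) > 0$ to $0$, a strict improvement. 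This rules out every equilibrium with a tie at the top of $price_i^r$ in which someone plays outside $S^r$, as well as every equilibrium with a unique top $i$ but some other $j \neq i$ playing outside $S^r$.

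It remains to handle the profile in which a single agent $i$ is the unique argmax, $s_i \notin S^r$, and every other agent plays in $S^r$. Agent $i$ receives $f_i = (b_0, m_1, \dots, m_N)$, but since each $m_j$ with $j \neq i$ is drawn from a zero-information strategy and $m_i$ given $b_i$ is independent of everything else, $i$'s posterior reduces to one based only on $(b_0, b_i)$. Hence $V_i$ takes a fixed value $V_i^1$ for every $s_i \notin S^r$, as long as $i$ remains the unique top. By continuity of $price_i^r$ together with $price_i^r > 0$ on $S_i \setminus S^r$ and $price_i^r \equiv 0$ on $S^r$, there is some $s_i' \notin S^r$ with $0 < price_i^r(s_i') < price_i^r(s_i)$; deviating to $s_i'$ preserves $i$'s unique-top status and the profit $V_i^1$ but strictly lowers her price, so $s_i$ is not a best response.

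The main obstacle is this last case: $V_i$ is locally constant over $S_i \setminus S^r$ while the price strictly exceeds its infimum $0$ everywhere on that set, so no best response exists among non-zero-information strategies. This step genuinely relies on the continuity of $price_i$ (and on the fact that when everyone else is in $S^r$, $V_i$ really does not depend on the particular $s_i \notin S^r$), and it is what prevents any ``single-winner'' equilibrium. Combining the two cases, $\vec{s}$ must be either $(s^t,\dots,s^t)$ or an element of $(S^r)^N$.
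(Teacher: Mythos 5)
Your overall strategy is the paper's own: split on whether the top of the relative-price ranking is tied or unique, let any non-zero-information agent who receives $\overrightarrow{0.5}$ drop to $S^r$ for free, and in the single-winner case use continuity of $price_i^r$ (the intermediate value theorem) to let the winner undercut herself while keeping the reply $(b_0,m_1,\dots,m_N)$. Both deviation arguments are sound, your case split is exhaustive, and your observation that the winner's profit is constant on $S_i\setminus S^r$ when all others play in $S^r$ makes explicit a point the paper leaves implicit.

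There is, however, one step that fails as written. In your first case you assert that whenever agent $j$ plays $s_j\notin S^r$ and is not the \emph{unique} maximizer of the relative price, then ``by construction of the competitive protocol, $j$ receives $f_j=\overrightarrow{0.5}$,'' and you conclude that this ``rules out every equilibrium with a tie at the top of $price_i^r$ in which someone plays outside $S^r$.'' The profile $(s^t,\dots,s^t)$ is exactly such a profile: every agent plays outside $S^r$ and all tie at relative price $1$. But the first branch of the protocol overrides the tie-breaking rule there and hands every agent $m=b$, so the premise ``$f_j=\overrightarrow{0.5}$'' is false for that profile, and your stated conclusion would wrongly eliminate the truthful equilibrium --- the one equilibrium the lemma is meant to leave standing. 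The fix is the one the paper makes at the outset of its proof: assume $\vec{s}\neq(s^t,\dots,s^t)$ and $\vec{s}\notin\{(s_1^r,\dots,s_N^r)\,|\,s_1^r,\dots,s_N^r\in S^r\}$ and derive a contradiction; only under that assumption does ``not the unique maximizer'' imply ``receives $\overrightarrow{0.5}$.'' With that hypothesis added explicitly, your argument goes through and coincides with the paper's.
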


    % \paragraph*{\underline{Proof}:}\mbox{} \\
    \begin{proof}
    Let $g$ be the social function.
    Let $\vec{s}=(s_1,\dots,s_N)$ be an equilibrium in the competitive protocol assuming that $\vec{s} \neq (s^t,\dots,s^t),\; \vec{s} \notin \{(s_1^r,\dots,s_N^r)|s_1^r,\dots,s_N^r \in S^r\}$.
    
    Suppose that $\vec{s}$ contains at least two strategies ($s_i$ and $s_j$) that yield the same highest relative price (comparing to the others).
    In those cases, the competitive protocol submits $\overrightarrow{0.5}$ to all agents.
    Therefore, $\forall s_i^r \in S^r$:
        \[U_i(\vec{s};F_i^c)=V_i(\vec{s};F_i^c)-price_i(s_i)<V_i(\vec{s};F_i^c)=V_i(s_i^r,s_{-i};F_i^c)\\=U_i(s_i^r,s_{-i};F_i^c).\]
    So agent $i$ prefers to deviate.
    
    Now, suppose that $\vec{s}=(s_1,\dots,s_N)$ contains only one strategy $s_i$ that yields the highest relative price. Therefore agent $i$ receives
    % If she gets $\overrightarrow{0.5}$ then she is the only non-truthful agent. In this case she prefers to deviate to $s^t$ in order to get $(b_0,m_1,\dots,m_N)$ or at least deviate to $s'_i\in S^r$ and pay 0 while still getting $\overrightarrow{0.5}$. Otherwise, she is not the only non-truthful agent and she gets
    $(b_0,m_1,\dots,m_N)$. Let $j$ be the agent who incurs the second highest relative price. She receives $\overrightarrow{0.5}$. Thus,
    if $s_j \in S^r$ then according to the intermediate value theorem (on $price_i^r$) there is $s'_i \in S_i$ such that $i$ prefers to deviate from $s_i$ to $s'_i$ and pay less but still receive $(b_0,m_1,\dots,m_N)$.
    Else, if $s_j \notin S^r$ then $j$ prefers to deviate to some $s'_j \in S^r$ (same considerations as before).
    
    Thus, we get that $\vec{s}$ is not an equilibrium.\\
    Note that deviating to $s'_j \in S^r$ is not necessarily $j$'s best response but it's sufficient for refuting the conjecture that $\vec{s}$ is an equilibrium.\hfill\(\blacksquare\)
    \end{proof}

    Lemma \ref{Lemma 4} is the core of the uniqueness proof. The competition between the agents and the main idea behind the construction of the competitive protocol is reflected in its proof.

    % \paragraph*{\underline{Theorem - Uniqueness}}\mbox{} \\
    \begin{theorem} [Uniqueness] \label{Uniqueness Theorem}
    If the center is not helpless and truthfulness is an equilibrium in the competitive protocol, then it is the only equilibrium.
    \end{theorem}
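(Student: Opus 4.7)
The plan is to combine Lemma \ref{Lemma 4} with Lemma \ref{Lemma 3}. By Lemma \ref{Lemma 4}, every equilibrium of the competitive protocol is either the truthful profile $(s^t,\dots,s^t)$ or a zero-information profile $(s_1^r,\dots,s_N^r)$ with each $s_j^r \in S^r$. Since truthfulness is an equilibrium by hypothesis, uniqueness reduces to ruling out every zero-information profile.

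The key step is to observe that in the competitive protocol the equilibrium condition is literally the \emph{same} at every zero-information profile. At any such profile every relative price equals $0$, so at least two agents tie for the maximum; the third case of $F_i^c$ triggers and each agent receives $\overrightarrow{0.5}$. Combined with $price_i(s_i^r)=0$, the status-quo utility $U_i(s_1^r,\dots,s_N^r;F_i^c)$ therefore collapses to a quantity $V_i^{\mathrm{null}}$ depending only on how well agent $i$ can predict $g$ from her own bit $b_i$ alone, independent of the specific $s_j^r$'s. Furthermore, if agent $i$ unilaterally deviates to some $s_i\notin S^r$, she becomes the unique maximum and receives $(b_0,m_1,\dots,m_N)$; but the messages $m_j$ for $j\neq i$ are drawn from zero-information strategies and are independent of the $b_j$'s, hence useless for predicting $g$. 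Consequently her post-deviation utility $V_i(s_i,s_{-i}^r;F_i^c)-price_i(s_i)$ also depends only on $s_i$, so the equilibrium inequality at every zero-information profile is the same condition, and either all such profiles are equilibria or none is.

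Finally, invoking Lemma \ref{Lemma 3}, the assumption that the center is not helpless says that not every zero-information profile is an equilibrium in the competitive protocol, so by the dichotomy just established none of them is. Together with Lemma \ref{Lemma 4} this shows truthfulness is the unique equilibrium. The step I expect to be the main obstacle is precisely the dichotomy in the middle paragraph: Lemma \ref{Lemma 3} only yields ``some zero-information profile is not an equilibrium'', while uniqueness demands ``no zero-information profile is an equilibrium'', and bridging this gap requires the payoff-invariance observation, which in turn relies on the specific structure of the competitive protocol (tied relative prices yielding $\overrightarrow{0.5}$, and a sole deviator extracting exactly the same information $b_0$ regardless of which zero-information strategies the other agents chose).
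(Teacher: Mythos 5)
Your proof is correct and follows essentially the same route as the paper: Lemma \ref{Lemma 4} reduces the problem to excluding the zero-information profiles, and Lemma \ref{Lemma 3} together with the non-helplessness hypothesis excludes them. The one place you go beyond the paper is your middle paragraph: the paper's own proof passes directly from ``not every zero-information profile is an equilibrium'' (the literal contrapositive of Lemma \ref{Lemma 3}) to ``no zero-information profile is an equilibrium,'' and your payoff-invariance observation---at any such profile every agent receives $\overrightarrow{0.5}$ and pays nothing, while a sole deviator extracts exactly $b_0$ plus uninformative messages regardless of which zero-information strategies the others chose---is precisely the argument needed to make that step explicit.
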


    % \paragraph*{\underline{Uniqueness - Proof}}\mbox{} \\
    \begin{proof} [Uniqueness]
    Suppose that truthfulness is an equilibrium in the competitive protocol.
    % According to Corollary \ref{Corollary 1} we assume that the center isn't helpless.
    According to Lemma \ref{Lemma 3} we get that $\forall s_1^r,\dots,s_N^r \in S^r: (s_1^r,\dots,s_N^r)$ is not an equilibrium in the protocol.
    Therefore, according to Lemma \ref{Lemma 4} we get that truthfulness is the only equilibrium.\hfill\(\blacksquare\)
    \end{proof}

    \subsection{Existence}
    In the previous sub-section we addressed the equilibrium uniqueness of the competitive protocol, assuming that truthfulness is indeed an equilibrium. Now, we approach the conditions in which truthfulness is indeed an equilibrium in this protocol. We start with a technical Lemma which implies that when the agents in the competitive protocol are truthful, then their best reply is either staying truthful or deviating to a zero information strategy. 
    
    % then any agent $i$ who deviates prefers to play a zero-information strategy over any $s_i\neq s^t \in S_i$.
    % then your optimal reply is either truthfulness or zero information.

    % \paragraph*{\underline{Lemma 6}:}\mbox{}
    \begin{lemma} \label{Lemma 6}
        \[\forall i\in \{1,\dots,N\}, \forall s_i^r\in S^r, \forall s_i\neq s^t\in S_i-S^r: U_i(s_i^r,(s^t,\dots,s^t);F_i^c) >  U_i(s_i,(s^t,\dots,s^t);F_i^c).\]
    \end{lemma}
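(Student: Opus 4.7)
The plan is to show that agent $i$'s reply from the competitive protocol is $\overrightarrow{0.5}$ in both scenarios being compared, so that the profits $V_i$ coincide and only the privacy costs differentiate the two utilities.

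First, I would check the reply in each case. When agent $i$ plays $s_i^r \in S^r$ while the others play $s^t$, we have $price_i^r(s_i^r) = 0$ and $price_j^r(s^t) = 1$ for every $j \neq i$. Thus agent $i$'s relative price is strictly below the maximum, and by the definition of $F_i^c$ she receives $\overrightarrow{0.5}$. When agent $i$ instead plays $s_i \in S_i - S^r$ with $s_i \neq s^t$, then $price_i^r(s_i) \leq price_i^r(s^t) = 1$. If the inequality is strict, agent $i$ is again not the unique highest and receives $\overrightarrow{0.5}$; if it is an equality, then at least two agents tie for the highest relative price and the tie-breaking clause of $F_i^c$ still returns $\overrightarrow{0.5}$ to everyone.

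Next, I would verify that the profit is identical in both cases. Since $f_i = \overrightarrow{0.5}$ is drawn uniformly at random, independently of the bits and the messages, and since agent $i$'s own message $m_i$ is generated from her known bit $b_i$ via the strategy she just chose (so it carries no further information about $b_{-i}$), we have $p(g(b) = 1 \mid f_i, b_i, m_i) = p(g(b) = 1 \mid b_i)$. Plugging into Lemma \ref{Profit Lemma} gives
\[V_i(s_i^r,(s^t,\ldots,s^t);F_i^c) = c_i \cdot \underset{b}{E}[\max\{p(g(b)=0 \mid b_i),\, p(g(b)=1 \mid b_i)\}] = V_i(s_i,(s^t,\ldots,s^t);F_i^c).\]

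Finally, I would close with the price comparison. Since $s_i^r \in S^r$ we have $price_i(s_i^r) = 0$, while $s_i \in S_i - S^r$ forces $price_i(s_i) > 0$. Subtracting these from the common profit yields the strict inequality $U_i(s_i^r,(s^t,\ldots,s^t);F_i^c) > U_i(s_i,(s^t,\ldots,s^t);F_i^c)$, as required. I do not anticipate any serious obstacle; the only delicate point is handling the possible tie $price_i^r(s_i) = 1$ in the second case, which is cleanly covered by the explicit tie-breaking rule of the competitive protocol.
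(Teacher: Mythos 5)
Your proof is correct and follows essentially the same route as the paper's: in both scenarios agent $i$ receives $\overrightarrow{0.5}$ (since some truthful agent has relative price $1$, she can never be the strict maximizer), so the profits $V_i$ coincide and the strict inequality comes entirely from $price_i(s_i^r)=0<price_i(s_i)$. Your explicit handling of the tie case and the justification that $f_i=\overrightarrow{0.5}$ yields equal profits are just more detailed versions of steps the paper states briefly.
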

    
     % \paragraph*{\underline{Proof}:}\mbox{}\\
    \begin{proof}
        Recall that $\forall s_i^r\in S^r, s_i\in S_i-S^r: price_i(s_i) > price_i(s_i^r) = 0$.\\
        In addition, if agent $i$ deviates from $s^t$ in the profile $(s^t,\dots,s^t)$ she receives $\overrightarrow{0.5}$ from the center regardless of the strategy she plays (there is at least one agent $j$ whose relative price is $price_j^r(s^t)=1$). Thus:
        \begin{multline*}
            \forall i\in \{1,\dots,N\}, \forall s_i^r\in S^r, \forall s_i\neq s^t\in S_i-S^r:\\ U_i(s_i^r,(s^t,\dots,s^t);F_i^c) = V_i(s_i^r,(s^t,\dots,s^t);F_i^c) > V_i(s_i^r,(s^t,\dots,s^t);F_i^c) - price_i(s_i) \\= V_i(s_i,(s^t,\dots,s^t);F_i^c) - price_i(s_i) =  U_i(s_i,(s^t,\dots,s^t);F_i^c).
        \end{multline*}\hfill\(\blacksquare\)
    \end{proof}

    We now approach the constraint on the existence of a truthful equilibrium in the competitive protocol. We will use the following notations:
    
    Let $F_i^{max}$ be a protocol such that:
    \[\forall s_1\in S_1,\dots,s_N\in S_N, \forall m_{-0}\in \{0,1\}^N: F_i^{max}(s_1,\dots,s_N,m)=(b_0,m_{-0}).\]
    Let $F_i^{min}$ be a protocol such that:
    \[\forall s_1\in S_1,\dots,s_N\in S_N, \forall m_{-0}\in \{0,1\}^N: F_i^{min}(s_1,\dots,s_N,m)=\overrightarrow{0.5}.\]

    % \paragraph{\underline{Definition}:} \mbox{}\\
    \begin{definition} \label{Fanatic Definition}
    We call agent $i$ \emph{fanatic} if
        \[\forall s_1\in S_1,\dots,s_N\in S_N:\;
        price_i(s^t) > V_i(s^t,\dots,s^t;F_i^{max}) - V_i(s_1,\dots,s_N;F_i^{min}).\]
    \end{definition}

    A fanatic agent is an agent who won't be truthful at any rate. That is, her cost for truthfulness ($price_i(s^t)$) is greater than her marginal profit from full information about the social function $g$ ($F_i^{max}(s^t,\dots,s^t,m=b)$) above and beyond the value from knowing her own bit, $b_i$.

    %  In service of proving existence, we present the following two lemmas. 
    % \paragraph*{\underline{Lemma 7}:}\mbox{}\\
    \begin{lemma} \label{Lemma 7}
    If there is a fanatic agent then there is no protocol with a truthful equilibrium.
    \end{lemma}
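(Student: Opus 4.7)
The plan is to argue by contradiction. Suppose that some agent $i$ is fanatic and that there exists a protocol $F=(F_1,\dots,F_N)$ in which the truthful profile $(s^t,\dots,s^t)$ is an equilibrium. I will derive an inequality on $price_i(s^t)$ that directly conflicts with Definition \ref{Fanatic Definition}.

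First, I invoke the equilibrium condition for agent $i$ against an arbitrary deviation to a zero-information strategy $s_i^r\in S^r$. Since $price_i(s_i^r)=0$, this yields
\[
V_i(s^t,(s^t)_{-i};F_i) - V_i(s_i^r,(s^t)_{-i};F_i) \;\geq\; price_i(s^t).
\]
To contradict fanaticism, I need to bound the left-hand side from above by $V_i(s^t,\dots,s^t;F_i^{max}) - V_i(s_i^r,(s^t)_{-i};F_i^{min})$. This reduces to bounding each term separately.

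For the first term, I apply the Comparison Theorem (Theorem \ref{Comparison Theorem}). Under $F_i^{max}$ with all other agents truthful, the center's reply is $(b_0,b_1,\dots,b_N)$ deterministically, so the posterior probabilities $v_k^{max}$ that agent $i$ assigns to $b_k=1$ are in $\{0,1\}$, giving $|v_k^{max}-0.5|=0.5$ for every $k$. For any other protocol $F_i$ and any realization of messages and replies, the posterior probability $v_k$ lies in $[0,1]$, hence $|v_k-0.5|\le 0.5=|v_k^{max}-0.5|$. Applying Theorem \ref{Comparison Theorem} pointwise over realizations gives $V_i(s^t,(s^t)_{-i};F_i)\le V_i(s^t,\dots,s^t;F_i^{max})$. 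For the second term, since $F_i^{min}$ deterministically returns $\overrightarrow{0.5}$, Lemma \ref{Better than Nothing} yields $V_i(s_i^r,(s^t)_{-i};F_i)\ge V_i(s_i^r,(s^t)_{-i};F_i^{min})$.

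Chaining the three inequalities gives
\[
price_i(s^t) \;\leq\; V_i(s^t,\dots,s^t;F_i^{max}) - V_i(s_i^r,(s^t)_{-i};F_i^{min}),
\]
which contradicts Definition \ref{Fanatic Definition} instantiated at the strategy profile $(s_i^r,s^t,\dots,s^t)$. The main obstacle I anticipate is purely a matter of care in Step 2: checking that the informativeness bound $|v_k^{max}-0.5|=0.5$ dominates $|v_k-0.5|$ pointwise in the sample space, so that the Comparison Theorem applies without additional regularity assumptions on $F_i$; the rest of the argument is a direct assembly of the equilibrium inequality with the two comparison bounds.
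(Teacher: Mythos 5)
Your proposal is correct and follows essentially the same route as the paper: invoke the truthful-equilibrium condition against a zero-information deviation (whose price is zero), bound the truthful profit above by $V_i(s^t,\dots,s^t;F_i^{max})$ via the Comparison Theorem, bound the deviation profit below by $F_i^{min}$ via Lemma \ref{Better than Nothing}, and contradict Definition \ref{Fanatic Definition}, which is universally quantified over profiles and so may be instantiated at $(s_i^r,(s^t)_{-i})$. The only cosmetic difference is that the paper chains the inequalities through $U_i$ rather than isolating $price_i(s^t)$ first; the substance is identical.
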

    
     % \paragraph*{\underline{Proof}:}\mbox{}\\
    \begin{proof}
    Assume to the contrary that $i$ is a fanatic agent and $A_i$ is a protocol with a truthful equilibrium.
    Thus:
    \begin{multline*}
    \forall s_1\in S_1,\dots,s_N\in S_N, \forall s_i^r \in S^r:\; U_i(s^t,\dots,s^t;A_i) \geq U_i(s_i^r,(s^t,\dots,s^t);A_i) \\= V_i(s_i^r,(s^t,\dots,s^t);A_i) \geq V_i(s_1,\dots,s_N;F_i^{min}).
    \end{multline*}
    The last inequality is derived from Lemma \ref{Better than Nothing}.
    But,
    \begin{multline*}
    \forall s_1\in S_1,\dots,s_N\in S_N: \\
    U_i(s^t,\dots,s^t;A_i) = V_i(s^t,\dots,s^t;A_i) - price_i(s^t) \leq V_i(s^t,\dots,s^t;F_i^{max}) - price_i(s^t) \\ < V_i(s^t,\dots,s^t;F_i^{max}) - (V_i(s^t,\dots,s^t;F_i^{max}) - V_i(s_1,\dots,s_N;F_i^{min})) = V_i(s_1,\dots,s_N;F_i^{min}).
    \end{multline*}
    and this is a contradiction.\hfill\(\blacksquare\)
    \end{proof}
    
    % \paragraph*{\underline{Lemma 8}:}\mbox{}\\
    \begin{lemma} \label{Lemma 8}
    If truthfulness is not an equilibrium in the competitive protocol then there is a fanatic agent.
    \end{lemma}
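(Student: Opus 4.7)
The plan is to prove the contrapositive. Assume truthfulness is not an equilibrium in the competitive protocol; I want to exhibit a fanatic agent. By assumption, some agent $i$ has a strictly profitable deviation from $s^t$ to some $s_i' \neq s^t$ against the profile $(s^t,\dots,s^t)$. My first move is to replace $s_i'$ with a zero-information strategy without loss of generality: Lemma~\ref{Lemma 6} says that against $(s^t,\dots,s^t)$, any $s_i^r \in S^r$ strictly dominates any $s_i \in S_i - S^r$ with $s_i \neq s^t$. Thus if a profitable deviation exists, some $s_i^r \in S^r$ is itself a profitable deviation, and I may fix such an $s_i^r$.

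Next I evaluate both sides of the deviation inequality under the competitive protocol, reducing them to $F_i^{max}$ and $F_i^{min}$. When everyone plays $s^t$, the messages satisfy $m = b$ and $F_i^c$ returns $m = b = (b_0, m_{-0})$, so $V_i(s^t,\dots,s^t; F_i^c) = V_i(s^t,\dots,s^t; F_i^{max})$. When $i$ deviates to $s_i^r$, she has relative price $0$ while every other agent $j \neq i$ has relative price $1$; since $N \geq 2$, at least two agents are tied at the strict maximum, so every agent (including $i$) receives $\overrightarrow{0.5}$, giving $V_i(s_i^r,(s^t)_{-i}; F_i^c) = V_i(s_i^r,(s^t)_{-i}; F_i^{min})$. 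Combining these with $price_i(s_i^r) = 0$, the hypothesis $U_i(s_i^r,(s^t)_{-i}; F_i^c) > U_i(s^t,\dots,s^t; F_i^c)$ rearranges to
\[
price_i(s^t) \;>\; V_i(s^t,\dots,s^t; F_i^{max}) - V_i(s_i^r,(s^t)_{-i}; F_i^{min}).
\]

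The final step is to upgrade this inequality, which only talks about one particular profile, to the universal inequality in Definition~\ref{Fanatic Definition}. The key observation is that $V_i(\,\cdot\,; F_i^{min})$ is actually independent of the strategy profile: under $F_i^{min}$ the reply $f_i$ is an independent uniform random string carrying no information about $b$, and $m_i$ is a function only of $b_i$ and $s_i$, so Bayesian updating collapses to $p(g(b) = k \mid b_i, m_i, f_i) = p(g(b) = k \mid b_i)$, which depends only on the (fixed) prior on the remaining bits. Hence $V_i(s_1,\dots,s_N; F_i^{min})$ takes the same value for every profile, and in particular equals $V_i(s_i^r,(s^t)_{-i}; F_i^{min})$. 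Substituting into the inequality above yields the fanatic condition for $i$.

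The only genuine obstacle is the reduction to a zero-information deviation in the first paragraph: everything else is unpacking definitions and observing that $F_i^{min}$ is uninformative. That reduction, however, is exactly what Lemma~\ref{Lemma 6} is designed for, so with it in hand the argument is essentially a bookkeeping chain of equalities and one strict inequality.
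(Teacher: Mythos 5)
Your proof is correct and follows essentially the same route as the paper's: the paper proves the contrapositive (no fanatic agent implies truthfulness is an equilibrium in $F_i^c$) using exactly the same ingredients --- Lemma \ref{Lemma 6} to reduce to zero-information deviations, the identification of $F_i^c$ with $F_i^{max}$ at the truthful profile and with $F_i^{min}$ at a unilateral deviation, and the profile-independence of $V_i(\,\cdot\,;F_i^{min})$, which the paper uses implicitly and you make explicit. One cosmetic slip: for $N=2$ the lone remaining truthful agent is the \emph{unique} maximizer of the relative price and does receive $m$, but the deviating agent $i$ still falls into the ``otherwise'' branch and receives $\overrightarrow{0.5}$, so the equality $V_i(s_i^r,(s^t)_{-i};F_i^c)=V_i(s_i^r,(s^t)_{-i};F_i^{min})$ and hence your conclusion stand.
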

    
     % \paragraph*{\underline{Proof}:}\mbox{}\\
    \begin{proof}
    Assume that there is no fanatic agent.
      \[\forall i \in \{1,\dots,N\}, \exists s_1\in S_1,\dots,s_N\in S_N: \; V_i(s^t,\dots,s^t;F_i^{max}) - V_i(s_1,\dots,s_N;F_i^{min}) \geq price_i(s^t). \]
    Therefore,
    \begin{multline*}
        \forall i \in \{1,\dots,N\}, \exists s_1\in S_1,\dots,s_N\in S_N, \forall s_i^r\in S^r: \\ U_i(s^t,\dots,s^t;F_i^c) = V_i(s^t,\dots,s^t;F_i^c) - price_i(s^t) = V_i(s^t,\dots,s^t;F_i^{max}) - price_i(s^t) \\ \geq V_i(s_1,\dots,s_N;F_i^{min}) =
        V_i(s_i^r,(s^t,\dots,s^t);F_i^c) = U_i(s_i^r,(s^t,\dots,s^t);F_i^c).
    \end{multline*}
    In addition, according to Lemma \ref{Lemma 6}:
    \[\forall i\in \{1,\dots,N\}, \forall s_i^r\in S^r, \forall s_i\neq s^t\in S_i-S^r: U_i(s_i^r,(s^t,\dots,s^t);F_i^c) >  U_i(s_i,(s^t,\dots,s^t);F_i^c).\]
    yielding that \[\forall i \in \{1,\dots,N\}, \forall s_i\in S_i: U_i(s^t,\dots,s^t;F_i^c) \geq  U_i(s_i,(s^t,\dots,s^t);F_i^c).\]
    So the agents don't gain by deviating from $s^t$, meaning that truthfulness is an equilibrium in the competitive protocol.\hfill\(\blacksquare\) \end{proof}

    Notice that Lemma \ref{Lemma 8} proves the opposite direction of Lemma \ref{Lemma 7}. If there isn't any protocol with a truthful equilibrium, then in particular the competitive protocol doesn't have a truthful equilibrium, and according to Lemma \ref{Lemma 8} there is a fanatic agent. Therefore we get:

    % \paragraph*{\underline{Conclusion 3}:}\mbox{}\\
    \begin{corollary} \label{Corollary 4}
    There is a fanatic agent if and only if there is no protocol with a truthful equilibrium.
    \end{corollary}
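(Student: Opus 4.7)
The plan is to observe that the corollary follows immediately from Lemmas \ref{Lemma 7} and \ref{Lemma 8}, once we notice that the competitive protocol $F^c$ serves as a canonical witness: any universal non-existence statement about truthful equilibria is witnessed, in particular, by the failure of $F^c$ to admit a truthful equilibrium.

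For the forward direction (fanatic agent $\Rightarrow$ no protocol has a truthful equilibrium), nothing further is needed: this is exactly the content of Lemma \ref{Lemma 7}.

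For the reverse direction I would argue by contrapositive. Suppose that no protocol admits a truthful equilibrium. Then, in particular, the competitive protocol $F^c$ does not admit a truthful equilibrium. By Lemma \ref{Lemma 8}, this already implies that some agent $i$ is fanatic. Combining the two directions yields the stated equivalence.

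There is no real obstacle here; the content of the corollary has been pushed into Lemmas \ref{Lemma 7} and \ref{Lemma 8}. The only conceptual point worth highlighting is why specializing the universal quantifier to $F^c$ is legitimate and tight: $F^c$ is optimal in the sense that if \emph{any} protocol has truthfulness as an equilibrium, then so does $F^c$ (this is essentially the optimality statement toward which the paper is building, and it is what makes Lemma \ref{Lemma 8} the genuine converse of Lemma \ref{Lemma 7}). Thus the competitive protocol provides the single instance needed to convert a ``there exists a protocol'' statement into one about $F^c$, closing the equivalence.
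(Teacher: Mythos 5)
Your proposal is correct and matches the paper's own argument exactly: the forward direction is Lemma \ref{Lemma 7} verbatim, and the reverse direction specializes ``no protocol has a truthful equilibrium'' to the competitive protocol and applies Lemma \ref{Lemma 8}. The additional remark about optimality is unnecessary for the logic (the specialization is valid simply because the competitive protocol is a protocol), but it does not affect correctness.
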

    
    Similarly, Lemma \ref{Lemma 7} proves that if there is a fanatic agent then there isn't any protocol with a truthful equilibrium, and in particular there is no truthful equilibrium in the competitive protocol.  Combining with Lemma \ref{Lemma 8} we get:

    % \paragraph*{\underline{Conclusion 4}:}\mbox{}\\
    \begin{corollary} \label{Existence Corollary}
    Truthfulness is an equilibrium in the competitive protocol if and only if there is no fanatic agent.
    \end{corollary}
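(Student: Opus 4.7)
The plan is to derive this corollary directly from Lemma \ref{Lemma 7} and Lemma \ref{Lemma 8} by contraposition, since together they already pin down both implications.

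For the forward direction, I would assume that truthfulness is an equilibrium in the competitive protocol. Then the competitive protocol itself is a witness of a protocol in which truthfulness is an equilibrium, so it is not the case that ``there is no protocol with a truthful equilibrium.'' The contrapositive of Lemma \ref{Lemma 7} then yields that no agent is fanatic.

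For the reverse direction, I would assume there is no fanatic agent and argue by contraposition of Lemma \ref{Lemma 8}: that lemma says that if truthfulness is not an equilibrium in the competitive protocol, then some agent is fanatic. Contrapositively, if no agent is fanatic, then truthfulness is an equilibrium in the competitive protocol, which is exactly what is needed.

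There is essentially no obstacle here: both implications are immediate from the two preceding lemmas once the statement is read in the right direction. The only thing to be careful about is not to confuse the roles of the two lemmas, since Lemma \ref{Lemma 7} talks about every protocol (and gives the ``only if'' direction of Corollary \ref{Corollary 4}), while Lemma \ref{Lemma 8} is specific to the competitive protocol and is precisely what makes the existence statement tight for this particular construction.
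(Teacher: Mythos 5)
Your proposal is correct and follows essentially the same route as the paper: the paper also obtains this corollary by combining Lemma \ref{Lemma 7} (applied to the competitive protocol as the witness, via contraposition) with the contrapositive of Lemma \ref{Lemma 8}. Nothing is missing.
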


    % \paragraph*{\underline{Lemma 7}:}\mbox{}\\
    % If the competitive protocol has a truthful equilibrium then there isn't any fanatic agent.

    % There is a fanatic agent in the mechanism if and only if the competitive protocol 
    
    % Truthfulness is an equilibrium in the competitive protocol if and only if there isn't a fanatic agent.

    \subsection{Optimality}
    
    Here we present a main result of the paper. We prove that if there is at least one protocol with a truthful equilibrium (not necessarily unique) then the competitive protocol has a truthful equilibrium as well.
    % induces truthfulness as a unique equilibrium. 

    % \subsection*{Theorem - Optimality}
    \begin{theorem} [Optimality] \label{Optimality Theorem}
    If there is a protocol with a truthful equilibrium, then truthfulness is an equilibrium in the competitive protocol.
    \end{theorem}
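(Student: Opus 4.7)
The plan is to prove the theorem as a direct corollary of Lemmas \ref{Lemma 7} and \ref{Lemma 8}, both of which are already set up precisely to pin truthfulness down in terms of the single structural property of having no fanatic agent. The Optimality Theorem then amounts to a syllogism that routes the hypothesis ``some protocol has a truthful equilibrium'' through the ``no fanatic agent'' invariant to the conclusion ``the competitive protocol has a truthful equilibrium''.

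Concretely, I would proceed in two steps. First, from the hypothesis that there exists some protocol in which truthfulness is an equilibrium, I would invoke the contrapositive of Lemma \ref{Lemma 7}: since a fanatic agent would preclude a truthful equilibrium in any protocol, the instance must contain no fanatic agent. Second, I would invoke the contrapositive of Lemma \ref{Lemma 8}: whenever no agent is fanatic, the competitive protocol $F^c$ must admit truthfulness as an equilibrium. Chaining these two implications gives exactly the statement. Equivalently, one may simply cite the Existence Corollary directly once the absence of a fanatic agent has been deduced from Lemma \ref{Lemma 7}.

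The main technical work has effectively been discharged upstream, so there is no substantive obstacle left at this stage. Lemma \ref{Lemma 7} does the hard direction by bracketing an arbitrary protocol between $F_i^{min}$ and $F_i^{max}$ (using Lemma \ref{Better than Nothing} together with the Comparison Theorem) and comparing the fanatic's privacy cost $price_i(s^t)$ against the resulting worst-to-best informational gain $V_i(s^t,\dots,s^t;F_i^{max}) - V_i(s_1,\dots,s_N;F_i^{min})$. Lemma \ref{Lemma 8} handles the competitive protocol specifically, leaning on Lemma \ref{Lemma 6} to argue that the best alternative to truthfulness in $F^c$ is a zero-information deviation, so that a failure of the truthful equilibrium forces exactly the fanatic inequality. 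Given those two lemmas, Theorem \ref{Optimality Theorem} is a one-line argument, and the only ``design choice'' in the proof is to notice that ``no fanatic agent'' is the common necessary-and-sufficient invariant that translates between the two sides of the implication.
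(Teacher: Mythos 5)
Your proof is correct and non-circular: the contrapositive of Lemma \ref{Lemma 7} gives ``some protocol has a truthful equilibrium $\Rightarrow$ no fanatic agent,'' the contrapositive of Lemma \ref{Lemma 8} gives ``no fanatic agent $\Rightarrow$ truthfulness is an equilibrium in $F^c$,'' and neither lemma's proof relies on the Optimality Theorem, so the chain is legitimate. The paper, however, does not take this route: it proves the theorem directly by contradiction, assuming $(s^t,\dots,s^t)$ is not an equilibrium in $F^c$, invoking Lemma \ref{Lemma 6} to reduce the profitable deviation to a zero-information one, and then bracketing the hypothesized protocol $A_i$ between $F_i^{max}$ and $F_i^{min}$ via the Comparison Theorem and Lemma \ref{Better than Nothing} to contradict the truthful-equilibrium property of $A_i$. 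The mathematical content is essentially the same in both arguments --- your version simply factors the bracketing step into Lemma \ref{Lemma 7} and the Lemma-\ref{Lemma 6} step into Lemma \ref{Lemma 8}, with ``no fanatic agent'' as the explicit pivot --- so your decomposition buys brevity and avoids re-deriving inequalities already established upstream (indeed, Corollary \ref{Corollary 4} together with Corollary \ref{Existence Corollary} already yields the theorem in one line), while the paper's direct proof is self-contained and does not route through the fanatic definition. One small caveat: your argument inherits whatever burden of proof sits in Lemma \ref{Lemma 7} (in particular the step $V_i(s^t,\dots,s^t;A_i)\leq V_i(s^t,\dots,s^t;F_i^{max})$, which itself needs the Comparison Theorem), so you are not escaping that work, only relocating it.
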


    % \paragraph*{\underline{Optimality - Proof}}\mbox{} \\
    \begin{proof} [Optimality]
    Assume by the way of contradiction that the profile $(s^t,\dots,s^t)$ is not an equilibrium in the competitive protocol. Thus, according to Lemma \ref{Lemma 6}, there is an agent $i$ such that
    $\forall s_i^r \in S^r:
    U_i(s_i^r,(s^t,\dots,s^t);F_i^c) >
    U_i(s^t,\dots,s^t;F_i^c)$.
    
    Let $A_i$ be a protocol with a truthful equilibrium. 
    % We denote $f_i^{max}=(b_0,b_{-i})$ and $f_i^{min}=\overrightarrow{0.5}$.\\
    We notice that 
    \begin{multline*}
        \forall k \in \{0,\dots,N\}: |E[p(b_k=1|F_i^{max}(s^t,\dots,s^t),m_i=b_i,b_i)]-0.5| = |E[p(b_k=1|b)]-0.5| \\= 0.5 \geq |E[p(b_k=1|A_i(s^t,\dots,s^t),m_i=b_i,b_i)]-0.5|.
    \end{multline*}
    Therefore, according to the comparison theorem $V_i(s^t,\dots,s^t;F_i^{max}) \geq V_i(s^t,\dots,s^t;A_i)$.
    Thus,
    $U_i(s^t,\dots,s^t;F_i^{max}) \geq U_i(s^t,\dots,s^t;A_i)$.\\
    In addition, according to the definition of $F_i^{min}$ and Lemma \ref{Better than Nothing} we get that $V_i(s_i^r,(s^t\dots,s^t);A_i) \geq  V_i(s_i^r,(s^t\dots,s^t);F_i^{min})$, and therefore $U_i(s^r,(s^t\dots,s^t);A_i) \geq  U_i(s^r,(s^t\dots,s^t);F_i^{min})$.
    
   Recall that $A_i$ has a truthful equilibrium.
    Thus,
    $\forall s_i^r \in S^r:
    U_i(s^t,\dots,s^t;A_i) \geq U_i(s_i^r,(s^t\dots,s^t);A_i).$
    Hence,
    $\forall s_i^r \in S^r:
    U_i(s^t,\dots,s^t;F_i^{max}) \geq U_i(s_i^r,(s^t\dots,s^t);F_i^{min})$.
    But recall that in the competitive protocol
    $U_i(s^t,\dots,s^t;F_i^c) = U_i(s^t,\dots,s^t;F_i^{max})$
    and
    $\forall s_i^r \in S^r:
    U_i(s_i^r,(s^t\dots,s^t);F_i^c) = U_i(s_i^r,(s^t\dots,s^t);F_i^{min}),$
    yielding that in the competitive protocol
    $\forall s_i^r \in S^r:
    U_i(s^t,\dots,s^t;F_i^c) \geq U_i(s_i^r,(s^t\dots,s^t);F_i^c)$ which is a contradiction.
    
    Thus, $(s^t,\dots,s^t)$ is an equilibrium in the competitive protocol as we wanted to show.\hfill\(\blacksquare\)
    \end{proof}

    Combining with the Uniqueness Theorem we get the following conclusion:
    \begin{corollary} \label{Optimality and Uniqueness}
    If the center is not helpless and there is a protocol with a truthful equilibrium (not necessarily unique) then the competitive protocol induces truthfulness as a unique equilibrium.
    \end{corollary}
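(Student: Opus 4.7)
The plan is to chain the two preceding named results in the only order that their hypotheses and conclusions allow. The corollary has two hypotheses (the center is not helpless, and some protocol admits a truthful equilibrium) and two conclusions to be extracted about the competitive protocol $F^c$ (existence of a truthful equilibrium, and uniqueness of that equilibrium). The two theorems already separate exactly along these lines, so the proof should be a two-step invocation.

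First I would use the second hypothesis — that some protocol $A = (A_1,\dots,A_N)$ has $(s^t,\dots,s^t)$ as an equilibrium — and feed it directly into Theorem \ref{Optimality Theorem}. That theorem concludes that $(s^t,\dots,s^t)$ is also an equilibrium of the competitive protocol $F^c$. At this point the existence half of the corollary is done, and I have not yet used the assumption that the center is not helpless.

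Second, with truthfulness now known to be an equilibrium of $F^c$, I would apply Theorem \ref{Uniqueness Theorem}, whose two hypotheses are precisely (i) the center is not helpless and (ii) $(s^t,\dots,s^t)$ is an equilibrium in $F^c$. Both are now in hand, so the theorem yields that $(s^t,\dots,s^t)$ is the \emph{only} equilibrium of $F^c$, completing the corollary.

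There is no real obstacle here; the statement is a clean composition of the two main theorems of the section, and the corollary is essentially a restatement bundling both conclusions into one. The only thing worth stressing in writing is that the Uniqueness Theorem has been stated precisely so that its truthful-equilibrium hypothesis is exactly the conclusion produced by the Optimality Theorem, so no interpolating lemma or auxiliary construction is required.
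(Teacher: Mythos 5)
Your proposal is correct and matches the paper's own derivation exactly: the paper obtains this corollary by first invoking Theorem \ref{Optimality Theorem} to get a truthful equilibrium in the competitive protocol and then Theorem \ref{Uniqueness Theorem} (using non-helplessness) to conclude uniqueness. No further comment is needed.
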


    % \paragraph{\underline{\textbf{ATTENTION!}}} \mbox{} \\
    
    % (1) What about the observers?
    
    % (2) The definition of $u_i$.
    
    % (3) The definition of $price_i$.

    \section{Beyond Equilibrium - Fair Competitive Protocol}

     In this section we extend our domain of interest beyond equilibrium. We showed that the competitive protocol is optimal when considering the notion of truthful implementation in equilibrium. Here we present a slight improvement of the competitive protocol. We show a protocol possessing the properties of the competitive protocol which also guarantees information to the agents outside of equilibrium.
    Recall that the competitive protocol allows only truthful equilibrium (under some natural assumptions). Consequently, it may be inflexible: for instance, if all agents except one share their information truthfully, then they all receive nothing ($\overrightarrow{0.5}$).
    Based on this phenomenon we have an incentive to construct a moderate protocol that preserves uniqueness and optimality but also has some guarantee for truthful agents regardless of the others' strategies.

    \begin{definition} [The fair competitive protocol (FCP)]
    
    If an agent plays $s^t$ she receives all of the center's information $m=(b_0,m_{-0})$.
    The agent with the highest relative price among the non-truthful agents receives $m$ as long as she is not the only non-truthful agent. If at least two non-truthful agents incur the same highest relative price (among the non-truthful agents) then all the non-truthful agents receive $\overrightarrow{0.5}$. In all other cases the agent receives $\overrightarrow{0.5}$.
    Formally:
    \begin{multline*}
        \forall i\in \{1,\dots,N\}, \forall s_1\in S_1,\dots,s_N\in S_N, \forall m_{-0}\in \{0,1\}^N: \\F_i^{fc}(s_1,\dots,s_N,m)=
    \begin{cases}
    m & \text{if } \;\; s_i=s^t\\
    m & \text{if } \;\; price_i^r(s_i) > \underset{\{j|s_j\neq s^t,\; j\neq i\}}{max}\{price_j^r(s_j)\} \text{ and } s_{-i}\neq \overrightarrow{s^t}\\
    \overrightarrow{0.5} & \text{Otherwise}
    \end{cases}
    \end{multline*}
    \end{definition}
    
    % Recall that the relative price ($price_i^r$) is simply a normalization of $price_i$ in order to scale the agents' costs.
    
    It's easy to see that a truthful agent receives all the center's information regardless of the others. We show via equilibrium equivalence that the fair competitive protocol preserves the properties of the competitive protocol.
    % In order to prove that the fair competitive protocol is optimal and indeed inducing a unique truthful equilibrium under the same conditions as the competitive protocol, we show equilibrium equivalence between the protocols. 
    
    % \paragraph*{\underline{Theorem - equivalence}:}\mbox{}\\
    \begin{theorem} [Equivalence] \label{Equivalence Theorem}
    The profile $\vec{s}=(s_1,\dots,s_N) \; s.t. \; s_1\in S_1,\dots s_N\in S_N$ is an equilibrium in the competitive protocol if and only if $\vec{s}$ is an equilibrium in the fair competitive protocol.
    \end{theorem}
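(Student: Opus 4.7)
My plan is to prove the equivalence in two steps: (i) characterize the equilibria of the fair competitive protocol (FCP) as being restricted to the truthful profile or some all-zero-information profile, paralleling Lemma \ref{Lemma 4} for the competitive protocol; and (ii) verify that on precisely these profiles the agents' on- and off-equilibrium utilities coincide between CP and FCP, so that the equilibrium conditions are identical.

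For step (i) I would assume $\vec{s}$ is an FCP equilibrium that is neither $(s^t,\dots,s^t)$ nor of the form $(s_1^r,\dots,s_N^r)$ with $s_i^r\in S^r$, and derive a contradiction by producing a profitable deviation. I split into two cases. In Case A there is at least one truthful agent $k$ and at least one non-truthful agent. Letting $M = \max_{j:\, s_j \neq s^t} price^r_j(s_j)$, if $M < 1$ then by continuity of the price function $k$ may switch to a non-truthful $s'_k$ with $price^r_k(s'_k) \in (M,1)$, still receive $m$ in FCP (she becomes the unique top non-truthful and some other non-truthful agent remains, so $s_{-k} \neq \overrightarrow{s^t}$), and strictly save cost; the key observation is that an agent's $V$ under $F_i^{max}$ does not depend on her own strategy, since her posterior on the other bits is determined by the other agents' strategies. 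If $M = 1$, I exhibit a profitable deviation for a non-truthful agent $l$ with $price^r_l(s_l) = 1$: either $l$ currently receives $m$ (and can reduce her relative price slightly while keeping the strict top), or $l$ is blocked by a tie at $1$ among non-truthfuls or by being the sole non-truthful (so she receives $\overrightarrow{0.5}$ yet pays $price_l(s^t)$, and can drop to $s^r$ to save the full cost). In Case B all agents are non-truthful and some has positive relative price; then either the unique top non-truthful agent shaves her price by continuity, or any agent tied at the top deviates to $s^r$ and still receives $\overrightarrow{0.5}$ but pays $0$.

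For step (ii) I would verify that on the all-truthful profile and on any all-zero-information profile every unilateral deviation yields the same received message (and hence the same $V_i$ and $U_i$) in CP as in FCP. On $(s^t,\dots,s^t)$, both protocols return $m$ to all agents, and any deviation by $i$ leaves her receiving $\overrightarrow{0.5}$ in both (in CP because of the tie at relative price $1$ among the $N-1$ remaining truthful agents; in FCP because $i$ is the unique non-truthful and $s_{-i} = \overrightarrow{s^t}$). On any zero-info profile, both protocols return $\overrightarrow{0.5}$ to all, and for any deviation $s'_i$ with $s'_i = s^t$ or $price^r_i(s'_i) > 0$ agent $i$ becomes the unique top and receives $m$ in both protocols, whereas a zero-info $s'_i$ preserves the tie and the outcome $\overrightarrow{0.5}$. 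Hence the utility from every deviation agrees between the two protocols, and the equilibrium condition for each $i$ is identical on these profiles.

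The main obstacle is the sub-case of step (i) where a non-truthful agent attains relative price exactly $1$. This sub-case requires a careful split according to whether such an agent is alone among non-truthfuls or tied with others, and exploits both the continuity of $price_i$ and the specific FCP rule that a single non-truthful agent surrounded by truthful agents receives $\overrightarrow{0.5}$ (so that she gains nothing from paying $price_l(s^t)$). Combining steps (i) and (ii) yields that the equilibrium sets of CP and FCP coincide, which is exactly the equivalence asserted by the theorem.
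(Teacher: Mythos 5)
Your proof is correct, and while its overall skeleton matches the paper's (show that the equilibria of both protocols can only be all-truthful or all-zero-information, then check those two families), the way you handle the two special families is genuinely different. The paper routes the truthful profile through the fanaticism machinery (Corollary \ref{Corollary 4} for one direction, the Optimality Theorem for the other) and the zero-information profiles through helplessness (Lemmas \ref{Lemma 2} and \ref{Lemma 3}), whereas you simply observe that on these profiles every unilateral deviation produces the \emph{same reply} (hence the same $V_i$, hence the same $U_i$) under $F^c$ and $F^{fc}$, so the equilibrium predicates coincide pointwise. This is more elementary and self-contained, at the cost of not exposing the connection to helplessness/fanaticism that the paper reuses elsewhere. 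Your step (i) -- ruling out mixed FCP equilibria -- is essentially the paper's case analysis in Direction 2 of Theorem \ref{Equivalence Theorem Appendix}, just organized by the value of $M=\max_{j:\,s_j\neq s^t}price^r_j(s_j)$ rather than by ties versus a strict maximum; your explicit treatment of the $M=1$ sub-case (tie at $1$, or a sole non-truthful agent) is careful and covers the configurations the paper's write-up passes over more quickly. Two cosmetic points: in step (ii) on the truthful profile, your justification that a deviator receives $\overrightarrow{0.5}$ in CP ("tie among the $N-1$ remaining truthful agents") is literally valid only for $N\geq 3$; for $N=2$ the deviator still receives $\overrightarrow{0.5}$, but because the one remaining truthful agent is either the unique maximizer or tied with the deviator. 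And your key observation that $V_i$ under a fixed reply rule does not depend on $i$'s own strategy (since $m_i$ is redundant given $b_i$) deserves to be stated once explicitly, as it underlies every "shave the price, keep the same reply" deviation.
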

    
    % The proof appears in Appendix E.

    % \paragraph*{\underline{Proof - Direction 1}:}\mbox{}\\
    \begin{proof} [Direction 1]
    Let $\vec{s}=(s_1,\dots,s_N) \; s.t. \; s_1\in S_1,\dots s_N\in S_N$ and assume that $\vec{s}$ is an equilibrium in the competitive protocol.\\
    If $s_1,\dots,s_N\in S^r$ then according to Lemma \ref{Lemma 3} the center is helpless and therefore according to Lemma \ref{Lemma 2}, $\vec{s}$ is an equilibrium in FCP.
    
    If $s_1=\dots=s_N=s^t$ is an equilibrium in the competitive protocol then according to Corollary \ref{Corollary 4} there isn't a fanatic agent. 
    Let $i$ be an agent. Assume by the way of contradiction that there is $s_i\in S_i-s^t$ such that $U_i(s_i,(s^t,\dots,s^t);F_i^{fc}) > U_i(s^t,\dots,s^t;F_i^{fc})$.\\
    But,
    \begin{multline*}
        U_i(s_i,(s^t,\dots,s^t);F_i^{fc}) = V_i(s_i,(s^t,\dots,s^t);F_i^{fc}) - price_i(s_i) \\= V_i(s_i,(s^t,\dots,s^t);F_i^{min}) - price_i(s_i) \leq V_i(s_i,(s^t,\dots,s^t);F_i^{min}).
    \end{multline*}
    And,
    \[U_i(s^t,\dots,s^t;F_i^{fc}) = V_i(s^t,\dots,s^t;F_i^{fc}) - price_i(s^t) \\= V_i(s^t,\dots,s^t;F_i^{max}) - price_i(s^t).\]
    Hence,
    \begin{multline*}
        V_i(s_i,(s^t,\dots,s^t);F_i^{min}) > V_i(s^t,\dots,s^t;F_i^{max}) - price_i(s^t)\\ \Longrightarrow price_i(s^t) >V_i(s^t,\dots,s^t;F_i^{max}) - V_i(s_i,(s^t,\dots,s^t);F_i^{min}).
    \end{multline*}
    yielding that $i$ is a fanatic agent, which is a contradiction.
    Therefore $\vec{s}=(s^t,\dots,s^t)$ is an equilibrium in FCP.
    
    If $\vec{s}$ is not $(s^t,\dots,s^t)$ or in $\{(s_1^r,\dots,s_N^r)| s_1^r,\dots,s_N^r\in S^r\}$ then according to Lemma \ref{Lemma 4}, $\vec{s}$ is not an  equilibrium in the competitive protocol.\hfill\(\blacksquare\)
    \end{proof}

    % \paragraph*{\underline{Proof - Direction 2}:}\mbox{}\\
    \begin{proof} [Direction 2]
    Let $\vec{s}=(s_1,\dots,s_N) \; s.t. \; s_1\in S_1, \dots, s_N\in S_N$ and assume that $\vec{s}$ is an equilibrium in FCP. Let $i$ be an agent.\\
    If $s_1,\dots,s_N\in S^r$ then $U_i(s_1,\dots,s_N;F_i^c)=U_i(s_1,\dots,s_N;F_i^{fc})$.\\
    In addition, \[\forall s'_i\in S_i-S^r: U_i(s'_i,s_{-i};F_i^c)=U_i(s'_i,s_{-i};F_i^{fc}).\] But \[\forall s'_i\in S_i-S^r: U_i(s_1,\dots,s_N;F_i^{fc})\geq U_i(s'_i,s_{-i};F_i^{fc}).\] Hence, \[\forall s'_i\in S_i-S^r: U_i(s_1,\dots,s_N;F_i^c)\geq U_i(s'_i,s_{-i};F_i^c).\]
    If $s_1=\dots=s_N=s^t$ then according to the optimality theorem $\vec{s}$ is an equilibrium in the competitive protocol.
    
    Now, in order to complete the proof we will show that FCP doesn't have other equilibria.
    Assume by the way of contradiction that $\vec{s}$ is an equilibrium but neither truthful or zero-information.
    Therefore, there are at least two agents $l,k$ such that $s_l\neq s^t$ and $s_k\notin S^r$. 
    
    (1) Suppose that $s_k=s^t$ and suppose without loss of generality that $price_l^r(s_l) \geq \underset{\{j|s_j\neq s^t,\; j\neq l\}}{max}\{price_j^r(s_j)\}$.
    
    (1.1) If $price_l^r(s_l) = \underset{\{j|s_j\neq s^t,\; j\neq l\}}{max}\{price_j^r(s_j)\}$ then agent $l$ receives $\overrightarrow{0.5}$ and pays $price_l(s_l)$. Therefore, if $s_l\notin S^r$ then $price_l(s_l)>0$ so she prefers (not necessarily her best response) to deviate to some $s'_l\in S^r$ and pay 0 while still getting $\overrightarrow{0.5}$.\\
    Otherwise, $s_l\in S^r$ and therefore $price_l^r(s_l) = \underset{\{j|s_j\neq s^t,\; j\neq l\}}{max}\{price_j^r(s_j)\} = 0$, yielding that all the non-truthful agents play zero-information strategies (from $S^r$).
    In this case, agent $k$ must be truthful so $price_k^r(s_k=s^t)>0$ and agent $k$ receives $(b_0,m_{-0})$. Hence, according to the intermediate value theorem, agent $k$ prefers to deviate to some $s'_k \notin S^r$ such that $price_k^r(s_k)>price_k^r(s'_k)>0$ where she still receives $(b_0,m_{-0})$.
    
    (1.2) Otherwise, $price_l^r(s_l) > \underset{\{j|s_j\neq s^t,\; j\neq l\}}{max}\{price_j^r(s_j)\} \geq 0$.
    Again, according to the intermediate value theorem, agent $l$ prefers to deviate to some $s'_l\notin S^r$ such that $price_l^r(s_l)>price_l^r(s'_l)>\underset{\{j|s_j\neq s^t,\; j\neq l\}}{max}\{price_j^r(s_j)\}$ and pay less, while still receiving $(b_0,m_{-0})$.
    
    (2) Now, $s_k\neq s^t$ and recall that $s_k\notin S^r$. If $price_k^r(s_k) > \underset{\{j|s_j\neq s^t,\; j\neq k\}}{max}\{price_j^r(s_j)\}$ then according to the intermediate value theorem, agent $k$ prefers to deviate to some $s'_k\notin S^r$ such that $price_k^r(s_k)>price_k^r(s'_k)>\underset{\{j|s_j\neq s^t,\; j\neq k\}}{max}\{price_j^r(s_j)\}$ and still get $(b_0,m_{-0})$.
    Otherwise, agent $k$ pays $price_k(s_k)>0$ and gets $\overrightarrow{0.5}$ so she prefers to deviate to $s'_k \in S^r$ and pay 0 while still getting $\overrightarrow{0.5}$.
    
    In conclusion, FCP has two kinds of equilibria $(s^t,\dots,s^t)$ and $\{(s_1^r,\dots,s_N^r)| \\ s_1^r,\dots,s_N^r\in S^r\}$ and therefore each equilibrium in FCP is an equilibrium in the competitive protocol as well.\\
    From this equilibria equivalence between the competitive and the fair competitive protocols we get that all of our results about the competitive protocol apply also for the fair competitive protocol.\hfill\(\blacksquare\)
    \end{proof}

    \section{Future Work}
    
    In this work we modeled the setting of mediated multi-party computation with privacy-aware agents. Here we suggest several ways to expand our study.  

\begin{itemize}
    \item In our model we consider  a simple setting where each agent has a single bit, and those bits are i.i.d and uniformly distributed. One may generalize this setting to any distribution; this could be done quite directly by modifying the zero-information strategies. A more involved work should consider dropping the independence assumption. Additionally, the idea of our main result remains the same when allowing each agent to have more than one bit of information, so this may be generalized as well.

    \item Our main goal is that the social function will be computed correctly by the center and agents. Therefore, we search for a protocol which induces truthfulness as a single equilibrium. However, there may be cases where some agents posses information which doesn't affect the social function. In those scenarios, "forcing" an agent to reveal her private information doesn't help in computing the value of the social function. Hence, we could slightly change the protocol for these agents and let them keep their information confidential.

    \item  Recall that the condition for uniqueness is that the center should not be helpless. This condition is quite intuitive and well justified, as we show that if the center is helpless then no protocol can lead to a unique truthful equilibrium. An interesting followup is the classification of the set of social functions which induces the helplessness of the center. For instance, Example \ref{Helpless Example} shows a social function which yields a helpless center. Such classification would allow replacing the assumption required in order to obtain uniqueness by one which depends solely on the social function.

\end{itemize}

    \bibliographystyle{splncs04}
    \bibliography{bibdb.bib}

    \appendix
    
    \section{A Preliminary Result}

    \begin{lemma}[Lemma \ref{Profit Lemma}] \label{Profit Lemma Appendix}
        \[V_i(\vec{s};F_i) = c_i\cdot \underset{b_{-i},m_{-i}}{E}[\underset{b_i,f_i,m_i}{E}[max\{p(g(b)=0|f_i,b_i,m_i),p(g(b)=1|f_i,b_i,m_i)\}]].\]
    \end{lemma}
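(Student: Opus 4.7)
The plan is to unpack the definition of $V_i$ and apply the tower property of conditional expectation, together with the observation that agent $i$'s guess $a_i$ is, by construction, the Bayes-optimal (MAP) decision given her observation $(b_i,m_i,f_i)$.

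First, I would rewrite $V_i(\vec{s};F_i)=E[v_i(b,m_i,f_i)]$, where the expectation is taken over the joint distribution on $\Omega$ induced by the uniform priors on $b$, the strategy profile $\vec{s}$, and the protocol $F_i$. Because $v_i$ equals $c_i$ precisely when $a_i(b_i,m_i,f_i)=g(b)$ and equals $0$ otherwise, linearity of expectation lets me pull out $c_i$:
\[V_i(\vec{s};F_i)=c_i\cdot \Pr\bigl(a_i(b_i,m_i,f_i)=g(b)\bigr).\]

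Next, I would condition on the information $(b_i,m_i,f_i)$ actually available to agent $i$ and apply the tower property:
\[\Pr(a_i=g(b)) = E_{b_i,m_i,f_i}\bigl[\Pr(a_i(b_i,m_i,f_i)=g(b)\mid b_i,m_i,f_i)\bigr].\]
For a fixed realization of $(b_i,m_i,f_i)$, the inner conditional probability equals $p(g(b)=a_i\mid b_i,m_i,f_i)$; by the very definition of $a_i$ as the MAP rule (choosing whichever of $0,1$ has the larger posterior), this equals $\max\{p(g(b)=0\mid b_i,m_i,f_i),\,p(g(b)=1\mid b_i,m_i,f_i)\}$. The tie-breaking clause deserves a brief remark: when both posteriors equal $1/2$ the uniformly random choice still succeeds with probability $1/2$, which coincides with the max, so the identity remains valid in that boundary case as well.

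Finally, the bracketed expression depends only on $(b_i,m_i,f_i)$, so prepending an outer expectation over $(b_{-i},m_{-i})$ is redundant but harmless and reproduces exactly the right-hand side of the lemma as stated. I do not anticipate a genuine obstacle here; the result is essentially the standard Bayes decision-rule identity specialised to this information structure, and the main care required is to be explicit about the MAP step (including the tie-breaking case) and to track which variables each conditional posterior already marginalises over.
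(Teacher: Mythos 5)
Your proof is correct and follows essentially the same route as the paper's: unpacking $V_i$ as $c_i\cdot\Pr(a_i=g(b))$, applying the tower property to condition on $(b_i,m_i,f_i)$, and observing that the MAP rule (with the tie-breaking case checked separately) makes the inner conditional success probability equal to the posterior maximum. The paper merely spells out the MAP step more laboriously, via the conditional independence of $a_i$ and $g(b)$ given $I_i$ and an explicit three-way case analysis, but the substance is identical.
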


    % \paragraph*{\underline{Proof}:}\mbox{}\\
    \begin{proof}
    For simplicity, we denote $I_i=\{b_i,m_i,f_i\}$.
    \begin{multline*}
        V_i(\vec{s};F_i) = \underset{b,m,f_i}{E}[v_i(b,m_i,f_i)] = \underset{b_{-i},m_{-i}}{E}[\underset{b_i,m_i,f_i}{E}[v_i(b,m_i,f_i)|b_i,m_i,f_i]] =\\ c_i\cdot \underset{b_{-i},m_{-i}}{E}[\underset{I_i}{E}[p(a_i(I_i)=g(b)|I_i)]] = c_i\cdot \underset{b_{-i},m_{-i}}{E}[\underset{I_i}{E}[p(a_i(I_i)=g(b)=1|I_i) + p(a_i(I_i)=g(b)=0|I_i)]].
    \end{multline*}
    \begin{multline*}
        p(a_i(I_i)=g(b)=1|I_i) = p(a_i(I_i)=1 \;\cap\; g(b)=1 |I_i) = p(a_i(I_i)=1 |I_i)\cdot p(g(b)=1 |I_i) \\= [p(p(g=1|I_i) > p(g=0|I_i)) + \frac{1}{2}p(p(g=1|I_i) = p(g=0|I_i))]\cdot p(g(b)=1 |I_i).
    \end{multline*}
    Notice that $a_i(I_i),\;g(b)$ are conditionally independent given $I_i$ because $a_i$ is solely determined by $I_i$, and $g$'s value doesn't depend on the action that the agent decides to make.
    Similarly,
    \[p(a_i(I_i)=g(b)=0|I_i) = [p(p(g=0|I_i) > p(g=1|I_i)) \\+ \frac{1}{2}p(p(g=1|I_i) = p(g=0|I_i))] \cdot p(g(b)=0 |I_i).\]
    Thus,
    \begin{multline*}
        V_i(\vec{s};F_i) = c_i\cdot \underset{b_{-i},m_{-i}}{E}[\underset{I_i}{E}[p(p(g=1|I_i) > p(g=0|I_i)) \cdot p(g(b)=1 |I_i) \\+ p(p(g=0|I_i) > p(g=1|I_i)) \cdot p(g(b)=0 |I_i) + \frac{1}{2}p(p(g=1|I_i) = p(g=0|I_i))\cdot (p(g=1|I_i) + p(g=0|I_i))]] \\= c_i\cdot \underset{b_{-i},m_{-i}}{E}[\underset{I_i}{E}[p(p(g=1|I_i) > p(g=0|I_i)) \cdot p(g(b)=1 |I_i) + p(p(g=0|I_i) > p(g=1|I_i)) \\\cdot p(g(b)=0 |I_i) + \frac{1}{2}p(p(g=1|I_i) = p(g=0|I_i))]]
    \end{multline*}
    But $p(g=0|I_i) + p(g=1|I_i) = 1$.
    Therefore,
    $p(g=0|I_i) > p(g=1|I_i)$ or $p(g=1|I_i) > p(g=0|I_i)$ or 
    $p(g=1|I_i) = p(g=0|I_i) = 0.5$.
    Hence,
    if $p(g=0|I_i) > p(g=1|I_i)$ then: 
    \begin{multline*}
        V_i(\vec{s};F_i) =
        c_i\cdot \underset{b_{-i},m_{-i}}{E}[\underset{I_i}{E}[0 \cdot p(g(b)=1 |I_i) + 1 \cdot p(g(b)=0 |I_i) + \frac{1}{2} \cdot 0]] = c_i\cdot \underset{b_{-i},m_{-i}}{E}[\underset{I_i}{E}[p(g(b)=0 |I_i)]].
    \end{multline*}
    Else, if $p(g=1|I_i) > p(g=0|I_i)$ then 
    % \[E[v_i(b,f_i)] = c\cdot E[p(g(b)=1 |f_i,b_i)].\]
    \(V_i(p;F_i) = c_i\cdot \underset{b_{-i},m_{-i}}{E}\underset{I_i}{E}[p(g(b)=1 |I_i)]].\)\\
    Otherwise, if $p(g=1|I_i) = p(g=0|I_i)$ then 
    % \[E[v_i(b,f_i)] = c\cdot E[\frac{1}{2}] =  c\cdot E[p(g(b)=1 |f_i,b_i)] = c\cdot E[p(g(b)=0 |f_i,b_i)].\]
    \[V_i(\vec{s};F_i) = c_i\cdot \underset{b_{-i},m_{-i}}{E}[\underset{I_i}{E}[\frac{1}{2}]] = c_i\cdot \underset{b_{-i},m_{-i}}{E}[\underset{I_i}{E}[p(g(b)=1 |I_i)]] = c_i\cdot \underset{b_{-i},m_{-i}}{E}[\underset{I_i}{E}[p(g(b)=0 |I_i)]].\]
    Therefore, in all possible cases we get that
    % \[ E[v_i(b,f_i)] = c\cdot E[max\{p(g(b)=0|f_i,b_i),p(g(b)=1|f_i,b_i)\}].\]
    \[ V_i(\vec{s};F_i) = c_i\cdot \underset{b_{-i},m_{-i}}{E}[\underset{f_i,m_i,b_i}{E}[max\{p(g(b)=0|f_i,m_i,b_i),p(g(b)=1|f_i,m_i,b_i)\}]].\] \hfill\(\blacksquare\)
    \end{proof}
    
     \section{Informational Profitability}
    %In this section we address the concept of "more / better" information and show its relation to an agent's profit.
    %Our goal is to make a connection between the knowledge that an agent has and her profit, and particularly to compare between different answers given by the center.
    %We prove here that there are cases in which an agent prefers in expectation one response over another. Understanding the influence of the center's answers on the agents' profits will help us analyse their behaviour in the strategy picking. 
    
    %Let $i\in \{1,\dots,N\}$ be an agent in the game, let $F_i^1,F_i^2$ be protocols for agent $i$, and let %$p_1=(s_1^1,\dots,s_N^1),\; p_2=(s_1^2,\dots,s_N^2)$ be strategy profiles.
    %Let $f_i^1$ be $F_i^1(p_1)$, and let $f_i^2$ be $F_i^2(p_2)$.
    %In addition, assume that $y^1=(y_0^1,y_1^1,\dots,y_N^1)$ represents in each coordinate $k$ the probability that the $k^{th}$ agent has the value $1$ (agent 0 is the center), according to $i$'s inference from $F_i^1(p_1)$ and her own bit $b_i$. Formally:
    %\[\forall k \in \{0,\dots,N\}: y^1_k = p(b_k = 1| f_i^1,b_i)\]
    %Similarly, assume that $y^2=(y_0^2,y_1^2,\dots,y_N^2)$ represents in each coordinate $k$ the probability that the $k^{th}$ agent has the value $1$, according to $i$'s inference from $F_i^2(p_2)$ and her own bit $b_i$. Formally:
    %\[\forall k \in \{0,\dots,N\}: y^2_k = p(b_k = 1| f_i^2,b_i)\]
    
    % \paragraph*{\underline{Comparison Lemma}:}\mbox{}\\
    \begin{lemma}[Comparison Lemma] \label{Comparison Lemma Appendix}
        Let $i,j\in N$, where $i \neq j$.
        Assume that $Y^1_{-j} = Y^2_{-j}$ and $|Y_j^1-0.5| \geq |Y_j^2-0.5|$.
        Then,
        $V_i(\vec{s_1};F_i^1) \geq V_i(\vec{s_2};F_i^2)$.
    \end{lemma}
    
    %\paragraph*{\underline{Intuition}:}\mbox{}\\
    %an agent may have two different responses from the center $f_i^1,\:f_i^2$.
    %We want to find out whether $f_i^1$ is better or worse than $f_i^2$ regardless of $g$.
    %Each response reveals some information about the agents' bits ($y^1$ or $y^2$) and therefore about the true value of $g$.
    %Our observation relies on the distances of $y_j^1,\:y_j^2$ from 0.5. Suppose that $y_j^1$'s distance from 0.5 is greater than $y_j^2$'s distance from 0.5, then $y_j^1$ reveals "more" information on $b_j$ and therefore on $g$ as well.     As an intuitive example, let $y_j^1=0$ and let $y_j^2=0.5$. Then, according to $y_j^1$ we infer that $b_j=0$. On the other hand, according to $y_j^2$ we infer that $p(b_j=0|y_j^2)=p(b_j=1|y_j^2)=0.5$.
    %Hence, $y_j^1$ is more informative than $y_j^2$.
    %Therefore, one may expect that getting $y_j^1$ over $y_j^2$ will also lead to a better guess on $g$'s value. 
    %But even though the intuition is quite clear, the claim is not immediate and true only in expectation.
    % \\?????????? add example when it doesn't always true ???????????\\
    %We begin with the basic case where an agent's knowledge over the bits is the same except from one bit, and later on we prove the claim for the general case.
    
    % \paragraph*{\underline{Proof}:}\mbox{}\\
    \begin{proof}
    
    \[\forall val\in \{0,1\}, \forall f_i^1,m_i^1,b_i: p(g(b)=val|f_i^1,m_i^1,b_i) = p(g(b)=val|y^1,f_i^1,m_i^1,b_i).\]
    The equality holds due to the fact that $y^1$ is deterministic given $f_i^1,m_i^1,b_i$, and therefore the events $g(b)=val$ and $y^1$ are independent given $f_i^1,m_i^1,b_i$.
    
    In addition, $g$ is solely determined by $b$, meaning that $g(b)=val$ and $f_i^1,m_i^1$ are dependent through the relation of $b$ to $f_i^1,m_i^1$. But this connection is exactly $y^1$. $y^1$ represents the agent's inference on $b$ given $f_i^1,m_i^1,b_i$. 
    Thus, given $y^1 = (p(b_0=1|f_{i,0}^1,m_i,b_i),\dots,p(b_N=1|f_{i,N}^1,m_i,b_i))$, the events $g(b)=val$ and $\{f_i^1,m^1\}$ are independent. In addition, $b_i$ is exactly $y_i^1$. 
    Hence,
    \[\forall val\in \{0,1\}: p(g(b)=val|y^1,f_i^1,m_i^1,b_i) = p(g(b)=val|y^1,b_i) = p(g(b)=val|y^1).\]
    Therefore,
    \[p(g(b)=val|f_i^1,m_i^1,b_i) = p(g(b)=val|y^1),\]
    \begin{equation} \label{(1)}
        \underset{f_i^1,m_i^1,b_i}{E}[p(g(b)=val|f_i^1,m_i^1,b_i)] = \underset{f_i^1,m_i^1,b_i}{E}[p(g(b)=val|y^1)].
    \end{equation}
    Similarly,
    \begin{equation} \label{(2)}
        \underset{f_i^2,m_i^2,b_i}{E}[p(g(b)=val|f_i^2,m_i^2,b_i)] = \underset{f_i^2,m_i^2,b_i}{E}[p(g(b)=val|y^2)].
    \end{equation}

    According to the lemma's assumption, $|Y_j^1-0.5| \geq |Y_j^2-0.5|$.
    Therefore, there are four possible cases:\\
    
    1) $Y_j^1-0.5\geq 0,\; Y_j^2-0.5\geq 0$. $Y_j^1-0.5 \geq Y_j^2-0.5 \Longrightarrow Y_j^1\geq Y_j^2 \geq 0.5 \Longrightarrow \mathbf{Y_j^1\geq Y_j^2 \geq 1-Y_j^1}.$\\
    
    2) $Y_j^1-0.5\geq 0,\; Y_j^2-0.5\leq 0 \Longrightarrow Y_j^1\geq 0.5\geq Y_j^2.\; Y_j^1-0.5 \geq 0.5-Y_j^2 \Longrightarrow Y_j^2\geq 1-Y_j^1 \Longrightarrow \mathbf{Y_j^1 \geq Y_j^2\geq 1-Y_j^1}.$\\
    
    3) $Y_j^1-0.5\leq0,\; Y_j^2-0.5\geq0 \Longrightarrow Y_j^2\geq0.5\geq Y_j^1.\; 0.5-Y_j^1 \geq Y_j^2-0.5 \Longrightarrow Y_j^2\leq 1-Y_j^1 \Longrightarrow \mathbf{1-Y_j^1 \geq Y_j^2\geq Y_j^1}.$\\
    
    4) $Y_j^1-0.5\leq0,\; Y_j^2-0.5\leq0.\; 0.5-Y_j^1 \geq 0.5-Y_j^2 \Longrightarrow 0.5\geq Y_j^2\geq Y_j^1 \Longrightarrow \mathbf{1-Y_j^1\geq Y_j^2 \geq Y_j^1}.$\\
    
    We assume without loss of generality that $Y_j^1\geq Y_j^2$ and therefore $Y_j^1\geq \{ Y_j^2,\;1-Y_j^2 \} \geq 1-Y_j^1$ (otherwise we can change $Y^1,Y^2$ such that each coordinate $k$ represents the probability that the $k^{th}$ agent has the value $0$ instead of $1$).\\
    The segment $[1-Y_j^1,Y_j^1]$ is a convex set and therefore there is a $z\in [0,1]$ such that $Y_j^1\cdot z+(1-Y_j^1)\cdot (1-z) = Y_j^2$. 
    
    Recall that for any realization of $b,m,f$ the center has the ability to calculate $y_j^1$ and $y_j^2$. 
    Hence, instead of communicating to agent $i$ the reply $f_i^2$ in the regular way (which induces the probability of $y_j^2$), the center will calculate $y_j^1$ and give $f_i^1$ with the probability $z$ or give $\vec{1}-f_i^1$ with the probability $1-z$ (adding a noise of $z$ to the reply $f_i^1$).
    Notice that both ways are equivalent from the perspective of agent $i$ whose knowledge is $y_j^2$ at both.
    Thus, 
    \begin{equation} \label{(3)}
        \forall k \in \{0,1\}: E[p(g(b)=k|y^1,y^2)] = E[p(g(b)=k|y^1)].
    \end{equation}
    Assume without loss of generality that $E[p(g(b)=0|y^1)] \geq 0.5$.\\
    Therefore, from \ref{(1)},  $E[p(g(b)=0|b_i,m_i^1,f_i^1)] = E[p(g(b)=0|y^1)] \geq 0.5$.
    Hence, \begin{multline*}
        E[p(g(b) = a_i(b_i,m_i^1,f_i^1) = 0| b_i,m_i^1,f_i^1)] = E[p(g(b) = 0| b_i,m_i^1,f_i^1)] =\\ E[p(g(b) = 0| y^1)] = E[p(g(b) = 0| y^1,y^2)] = E[p(g(b)=0 \;\cap\; a_i(b_i,m_i^2,f_i^2) = 0|y^1, y^2)] +\\ E[p(g(b)=0 \;\cap\; a_i(b_i,m_i^2,f_i^2) = 1|y^1, y^2)].
    \end{multline*} The equalities are derived from the definition of $a_i$, \ref{(1)}, \ref{(3)}, and the law of total expectation respectively.
    In a similar way,
    \begin{multline*}
        E[p(g(b) = a_i(b_i,m_i^1,f_i^1) = 0| b_i,m_i^1,f_i^1)] = E[p(g(b) = 0| y^1)] \geq \\ E[p(g(b) = 1| y^1)] =  E[p(g(b) = 1| y^1,y^2)]= E[p(g(b)=1 \;\cap\; a_i(b_i,m_i^2,f_i^2) = 1|y^1, y^2)] +\\ E[p(g(b)=1 \;\cap\; a_i(b_i,m_i^2,f_i^2) = 0|y^1, y^2)]
    \end{multline*}
    and therefore,
    % \begin{multline*}
    %     p(g(b) = a_i(b_i,m_i^1,f_i^1) = 0|b_i,m_i^1,f_i^1) = p(g(b) = a_i(b_i,m_i^1,f_i^1) = 0|y^1) \\
    %     \overset{(4)}{\geq} max\{p(g(b) = a_i(b_i,m_i,f_i^2) = 0|b_i,m_i, y^1, y^2),\;p(g(b) = a_i(b_i,m_i,f_i^2) = 1|b_i,m_i, y^1, y^2)\}
    % \end{multline*}
    \begin{equation} \label{(4)}
        E[p(g(b) = 0|y^1)] \geq max\{E[p(g(b) = a_i(b_i,m_i^2,f_i^2) = 0|y^1, y^2)],\;E[p(g(b) = a_i(b_i,m_i^2,f_i^2) = 1|y^1, y^2)]\}
    \end{equation}
    yielding that
    % \begin{multline*}
    %     V_i(p_1;F_i^1) = \underset{b_{-i},m_{-i}}{E}[\underset{b_i,m_i,f_i^1,f_i^2}{E}[v_i(b,m_i,f_i^1)|b_i,m_i,f_i^1,f_i^2]] =\\ c_i\cdot \underset{b_{-i},m_{-i}}{E}[\underset{b_i,m_i,f_i^1,f_i^2}{E}[max\{p(g(b)=a_i(b_i,m_i,f_i^1)=0|b_i,m_i,y^1,y^2),p(g(b)=a_i(b_i,m_i,f_i^1)=1|b_i,m_i,y^1,y^2)\}]] \\= c_i\cdot \underset{b_{-i},m_{-i}}{E}[\underset{b_i,m_i,f_i^1,f_i^2}{E}[max\{p(g(b)=a_i(b_i,m_i,f_i^1)=0|b_i,m_i,y^1),p(g(b)=a_i(b_i,m_i,f_i^1)=1|b_i,m_i,y^1)\}]] \\ \geq c_i\cdot \underset{b_{-i},m_{-i}}{E}[\underset{b_i,m_i,f_i^1,f_i^2}{E}[max\{p(g(b) = a_i(b_i,m_i,f_i^2) = 0|b_i,m_i, y^1, y^2),\\p(g(b) = a_i(b_i,m_i,f_i^2) = 1|b_i,m_i, y^1, y^2)\}]] = V_i(p_1;F_i^1)
    % \end{multline*}
    \begin{multline*}
        V_i(\vec{s_1};F_i^1) = \underset{b_{-i},m_{-i}^1,m_{-i}^2}{E}[\underset{b_i,m_i^1,m_i^2,f_i^1,f_i^2}{E}[max\{p(g(b)=0|b_i,m_i^1,m_i^2,f_i^1,f_i^2),p(g(b)=1|b_i,m_i^1,m_i^2,f_i^1,f_i^2)\}]] \\= c_i\cdot \underset{b_{-i},m_{-i}^1,m_{-i}^2}{E}[\underset{b_i,m_i^1,m_i^2,f_i^1,f_i^2}{E}[max\{p(g(b)=0|y^1,y^2),p(g(b)=1|y^1,y^2)\}]] \\= c_i\cdot \underset{b_{-i},m_{-i}^1,m_{-i}^2}{E}[\underset{b_i,m_i^1,m_i^2,f_i^1,f_i^2}{E}[max\{p(g(b)=0|y^1),p(g(b)=1|y^1)\}]] \\= c_i\cdot \underset{b_{-i},m_{-i}^1,m_{-i}^2}{E}[\underset{b_i,m_i^1,m_i^2,f_i^1,f_i^2}{E}[p(g(b)=0|y^1)]]
        \\ \geq c_i\cdot
        \underset{b_{-i},m_{-i}^1,m_{-i}^2}{E}[\underset{b_i,m_i^1,m_i^2,f_i^1,f_i^2}{E}[max\{p(g(b) = a_i(b_i,m_i,f_i^2) = 0|y^1, y^2),p(g(b) = a_i(b_i,m_i,f_i^2) = 1|y^1, y^2)\}]] \\ = V_i(\vec{s_2};F_i^2)
    \end{multline*}
    The first and last equalities are derived from the law of total expectation. The second equality is derived from \ref{(1)} and \ref{(2)}. The third equality is derived from \ref{(3)}. The forth is derived from our assumption above, and the inequality is derived from \ref{(4)}. \hfill\(\blacksquare\)

    \end{proof}

    \begin{theorem}[Comparison Theorem] \label{Comparison Theorem Appendix}
    Assume that $\forall k\in \{0,\dots,N\}: |Y_k^1-0.5| \geq |Y_k^2-0.5|$. Then,
    $V_i(\vec{s_1};F_i^1) \geq V_i(\vec{s_2};F_i^2)$.
    \end{theorem}

    \begin{proof}
    Note that in order to ease the reading of the proof we abuse notations in the probabilistic expressions along the proof and condition on the protocols' distributions instead of the expectations of the center's replies (which are generated from these distributions). 
    Recall that $F_i^1=(F_{i,0}^1,\dots,F_{i,N}^1), F_i^2=(F_{i,0}^2,\dots,F_{i,N}^2)$.
    We construct a protocol $F_i$ such that $F_i(\vec{s_2},m^2)=(F_{i,0}^1(\vec{s_1},m_0^1),F_{i,-0}^2(\vec{s_2},m^2_{-0}))$. Hence, $E[p(b_0=1|F_{i,0}(\vec{s_2},m_0^2),m_i^2,b_i)] = Y_0^1$ and $\forall k\in \{1,\dots,N\}: E[p(b_k=1|F_{i,k}(\vec{s_2},m_k^2),m_i^2,b_i)] = Y_k^2$. Therefore, according to the comparison lemma $V_i(\vec{s_2};F_i) \geq V_i(\vec{s_2};F_i^2)$. Now, let $\vec{s'}=(s_1^1,s_2^2,\dots,s_N^2)$ be the strategy profile that yields the messages $m'=(b_0,m_1^1,m_2^2,\dots,m_N^2)$. In addition we construct $F'_i(\vec{s'},m')=(F_{i,0}(\vec{s_2},m_0^2), F_{i,1}^1(\vec{s_1},m_1^1)$,
    $F_{i,-\{0,1\}}(\vec{s_2},m_{-\{0,1\}}^2))$. Hence, $E[p(b_0=1|F'_{i,0}(\vec{s'},m'_0),m'_i,b_i)] = E[p(b_0=1|F_{i,0}(\vec{s_2},m_0^2),m_i^2,b_i)]$, $E[p(b_1=1|F'_{i,1}(\vec{s'},m'_1),m'_i,b_i)] = Y_0^1$ and $\forall k\in \{2,\dots,N\}: E[p(b_k=1|F'_{i,k}(\vec{s'},m'_k),m'_i,b_i)] = E[p(b_k=1|F_{i,k}(\vec{s_2},m_k^2),m_i^2,b_i)]$. Therefore, according to the comparison lemma $V_i(\vec{s'};F'_i) \geq V_i(\vec{s_2};F_i) \geq V_i(\vec{s_2};F_i^2)$. Now we construct $\vec{s''}=(s_1^1,s_2^1,s_3^2,\dots,s_N^2)$, $F''_i$ and show that $V_i(\vec{s''};F''_i) \geq V_i(\vec{s'};F'_i) \geq V_i(\vec{s_2};F_i^2)$. We continue this process for the rest of the $\{3,\dots,N\}$ bits (each time upgrading $Y_k^2$ to $Y_k^1$) until we get that $V_i(\vec{s_1};F_i^1) \geq V_i(\vec{s_2};F_i^2)$ as we wanted to show.\hfill\(\blacksquare\)
    \end{proof}

\section{Uniqueness}

    \begin{lemma} [Lemma \ref{Lemma 1}] \label{Lemma 1 Appendix}
        For every agent $i$ and her corresponding protocol $F_i$:
        \[\forall s_1\in S_1,\dots, s_N\in S_N: U_i(s_1,\dots,s_N;F_i^+) \geq U_i(s_1,\dots,s_N;F_i).\]
    \end{lemma}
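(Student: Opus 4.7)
The plan is to reduce the utility comparison to a profit comparison and then invoke the Comparison Theorem. Since $F_i^+$ is defined so that agent $i$'s strategy in both protocols is the same $s_i$, the privacy cost $price_i(s_i)$ is identical on the two sides of the claimed inequality. Consequently, it suffices to prove the profit inequality
\[V_i(s_1,\dots,s_N;F_i^+) \geq V_i(s_1,\dots,s_N;F_i).\]

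I would then split into two cases according to whether $s_i$ is a zero-information strategy. If $s_i \in S^r$, then by the first branch of Definition \ref{Improvement Definition}, $F_i^+ = F_i$ as distributions, so the profits coincide and the inequality is immediate (with equality). If $s_i \in S_i - S^r$, then the replies of $F_i^+$ and $F_i$ agree in coordinates $1,\dots,N$, and only the zeroth coordinate differs: the improved protocol returns $m_0 = b_0$, while $F_i$ returns a $\{0,1\}$-valued random variable $F_{i,0}(\vec s, m_0)$. I would then set up the vectors used in the Comparison Theorem: let $v^1_k = p(b_k=1 \mid F_{i,k}^+(\vec s, m),m_i,b_i)$ and $v^2_k = p(b_k=1 \mid F_{i,k}(\vec s, m),m_i,b_i)$. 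For $k\neq 0$ the two protocols use the same conditional distribution, so $v^1_k = v^2_k$, and in particular $|v^1_k-0.5|=|v^2_k-0.5|$. For $k=0$, since $F_{i,0}^+$ returns exactly $b_0$, the posterior $v^1_0$ equals $b_0 \in \{0,1\}$, giving $|v^1_0 - 0.5| = 0.5 \geq |v^2_0 - 0.5|$ trivially.

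Having verified the hypothesis of the Comparison Theorem coordinate-by-coordinate, I would apply it to conclude $V_i(\vec s;F_i^+) \geq V_i(\vec s;F_i)$, and then add back the equal privacy costs to obtain the desired utility inequality. The main conceptual obstacle is essentially nonexistent once the Comparison Theorem is available — the only delicate point is to observe that in the non-zero-information case the per-coordinate posterior for coordinate $0$ truly becomes deterministic in $b_0$ (distance exactly $0.5$ from $0.5$), which is by construction of $F_i^+$, while the other coordinates of the reply, the strategies, and the messages $m_{-0}$ are untouched, so no subtle dependency is introduced that could interfere with the application of the theorem.
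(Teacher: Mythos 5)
Your proposal is correct and follows essentially the same route as the paper's proof: split on whether $s_i$ is zero-information, observe that in the non-trivial case only the zeroth coordinate of the reply changes and becomes fully informative about $b_0$ (posterior distance $0.5$ from $0.5$) while all other coordinates are untouched, and invoke the comparison machinery to get $V_i(\vec s;F_i^+)\geq V_i(\vec s;F_i)$ before adding back the identical privacy cost. The only cosmetic difference is that you cite the Comparison Theorem where the paper cites the Comparison Lemma; since the differing coordinate is $k=0$ and the lemma as stated restricts $j\in\{1,\dots,N\}$, your choice is if anything the more literally applicable one.
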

    
    \begin{proof}
    Note that in order to ease the reading of the proof we abuse notations as we do in the proof of the Comparison Theorem \ref{Comparison Theorem Appendix}. Let $i$ be an agent, let $F_i$ be an arbitrary protocol, let $\vec{s}=(s_1,\dots,s_N)$ be a strategy profile and let $m$ be the agents' messages. If $F_i(\vec{s},m)=F_i^+(\vec{s},m)$ then obviously $U_i(\vec{s};F_i^+)=U_i(\vec{s};F_i)$.
    Otherwise, agent $i$ doesn't play a zero-information strategy and therefore she knows $b_0$ in $F_i^+$. Hence, $|E[p(b_0=1|F_i^+(\vec{s},m))]-0.5|=0.5\geq |E[p(b_0=1|F_i(\vec{s},m))]-0.5|$. In addition, notice that $F_{i,0}(\vec{s},m_0)$  and $\{b_1,\dots,b_N\}$ are independent (recall that the agents' bit are $i.i.d$) so the override of $F_{i,0}(\vec{s},m_0)$ in $F_i^+$ doesn't influence the agent's knowledge on $b_{-0}$. Combining with the fact that $F_{i,-0}=F_{i,-0}^+$ we get that her knowledge on the other bits is the same.
    Thus, according to the comparison lemma $V_i(\vec{s};F_i^+)\geq V_i(\vec{s};F_i)$, yielding that $U_i(\vec{s};F_i^+)\geq U_i(\vec{s};F_i)$.\hfill\(\blacksquare\)
    \end{proof}

    \begin{lemma} [Lemma \ref{Better than Nothing}] \label{Better than Nothing Appendix}
        Let $F_i^1,F_i^2$ be protocols for agent $i$. Let $\vec{s_1},\vec{s_2}$ and $m^1,m^2$ be the strategy profiles and the agents' messages in those protocols respectively. If $F_i^2(\vec{s_2},m^2)=\overrightarrow{0.5}$ then $V_i(\vec{s_1};F_i^1) \geq V_i(\vec{s_2};F_i^2)$.
    \end{lemma}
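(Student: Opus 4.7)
The plan is to reduce this lemma to the Comparison Theorem by showing that the information vector $v^2$ derived from the constant-noise reply is coordinate-wise no more informative than $v^1$.

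First I would unpack what $v^2$ looks like when $F_i^2(\vec{s_2}, m^2) = \overrightarrow{0.5}$. Since $f_i^2$ is then a vector of independent fair coins, it is stochastically independent of $b$ and of $m^2$. For any $k \neq i$, the coordinate $v_k^2 = p(b_k = 1 \mid f_i^2, m_i^2, b_i)$ collapses to $p(b_k = 1 \mid m_i^2, b_i)$, and because $b_k$ is independent of $b_i$ (the bits are i.i.d. uniform) and $m_i^2$ depends only on $b_i$ through $s_i^2$, this further collapses to the prior $p(b_k = 1) = \tfrac12$. For the center's coordinate $k = 0$ the same independence argument gives $v_0^2 = \tfrac12$. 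The only coordinate that can be informative is $k = i$: here $v_i^2 = p(b_i = 1 \mid f_i^2, m_i^2, b_i) = b_i \in \{0,1\}$, so $|v_i^2 - \tfrac12| = \tfrac12$.

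Next I would check that $v^1$ dominates $v^2$ coordinate-wise in the sense required by the Comparison Theorem. For every $k \neq i$ the dominance $|v_k^1 - \tfrac12| \geq 0 = |v_k^2 - \tfrac12|$ is trivial. For $k = i$, the key observation is that agent $i$ always knows her own bit, so under any protocol $v_i^1 = b_i$, giving $|v_i^1 - \tfrac12| = \tfrac12 = |v_i^2 - \tfrac12|$. Hence the hypothesis of the Comparison Theorem holds at every coordinate.

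Finally I would invoke the Comparison Theorem to conclude $V_i(\vec{s_1}; F_i^1) \geq V_i(\vec{s_2}; F_i^2)$. The main subtlety—and really the only thing to be careful about—is the $k = i$ coordinate: one might worry that $v_i^2 = b_i$ is already maximally informative and could accidentally exceed $v_i^1$, but since $v_i^1$ is also $b_i$ under any protocol, the inequality is tight there rather than violated. Everything else is a straightforward independence computation, so no new machinery beyond the Comparison Theorem is required.
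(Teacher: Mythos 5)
Your proposal is correct and follows essentially the same route as the paper's proof: establish the coordinate-wise dominance $|v_k^1-0.5|\geq 0=|v_k^2-0.5|$ for $k\neq i$ and $|v_i^1-0.5|=0.5=|v_i^2-0.5|$ for the agent's own coordinate, then invoke the Comparison Theorem. The only difference is that you spell out the independence computation showing $v_k^2=\tfrac12$ for $k\neq i$, which the paper states without elaboration.
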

    
    \begin{proof}
    Note that in order to ease the reading of the proof we abuse notations as we do in the proof of the Comparison Theorem \ref{Comparison Theorem Appendix}.
    Notice that $\forall k\neq i \in \{0,\dots,N\}:
    |E[p(b_k=1|F_{i,k}^1(\vec{s_1},m_k^1),m_i^1,b_i)]-0.5| \geq 0 = |[E[p(b_k=1|F_{i,k}^2(\vec{s_2},m_k^2),m_i^1,b_i)]-0.5|$
    and obviously,
    $|E[p(b_i=1|F_{i,i}^1(\vec{s_1},m_i^1),m_i^2,b_i)]-0.5| = 0.5 = |E[p(b_i=1|F_{i,i}^2(\vec{s_2},m_i^2),m_i^2,b_i)]-0.5|$.
    Therefore, according to the comparison theorem $V_i(\vec{s_1};F_i^1) \geq  V_i(\vec{s_2};F_i^2)$.\hfill\(\blacksquare\)
    \end{proof}

\end{document}